\newcommand{\norm}[1]{ \parallel #1 \parallel}
\newcommand{\mb}{\mathbb}
\newcommand{\mc}{\mathcal}
\newcommand{\eul}{\mathfrak}
\newcommand{\A}{\eul A}
\newcommand{\Ao}{{\eul A}_{\scriptscriptstyle 0}}
\newcommand{\Bo}{{\eul B}_{\scriptscriptstyle 0}}
\newcommand{\ze}{{\scriptscriptstyle 0}}
\newcommand{\M}{\eul M}
\newcommand{\D}{\mc D}
\newcommand{\HH}{\mc H}
\newcommand{\id}{{\mb I}}
\newcommand{\FF}{\mathfrak F}
\newcommand{\GK}{{\eul K}}
\newcommand{\X}{{\mathfrak X}}
\newcommand{\vp}{\varphi}
\newcommand{\vna}{von Neumann algebra}
\newtheorem{defn}{Definition}[section]
\newtheorem{prop}[defn]{Proposition}
\newtheorem{thm}[defn]{Theorem}
\newtheorem{lemma}[defn]{Lemma}
\newtheorem{example}[defn]{Example}
\newtheorem{rem}[defn]{Remark}
\def\x{\relax\ifmmode {\mbox{*}}\else*\fi}
\newcommand{\beex}{\begin{example}$\!\!${\bf }$\;$\rm }
\newcommand{\enex}{ \end{example}}
\newcommand{\berem}{\begin{rem}$\!\!${\bf }$\;$\rm }
\newcommand{\enrem}{ \end{rem}}
\newcommand{\aff}{\,\eta\,}
\newcommand{\BH}{\mc{B}(\mathcal{H})}
\newcommand{\frm}[2]{\langle {#1},{#2}\rangle}
\begin{document}
\title{Quasi *-algebras of measurable operators}%
\author{Fabio Bagarello}%
\address{Dipartimento di Metodi e Modelli Matematici, Universit\`a
di Palermo, I-90128 Palermo (Italy)}
\email{bagarell@unipa.it}%
\author{Camillo Trapani}%
\address{Dipartimento di Matematica ed Applicazioni, Universit\`a
di Palermo, I-90123 Palermo (Italy)} \email{trapani@unipa.it}
\author{Salvatore Triolo}%
\address{Dipartimento di Matematica ed Applicazioni, Universit\`a
di Palermo, I-90123 Palermo (Italy)} \email{salvo@math.unipa.it}
\subjclass[2000] {Primary 46L08; Secondary 46L51, 47L60 }

\keywords{Banach C*-modules, Non commutative integration, Partial
algebras of operators}
\begin{abstract} Non-commutative $L^p$-spaces are shown to constitute examples of a class
of Banach quasi *-algebras called CQ*-algebras. For $p\geq 2$ they
are also proved to possess a {\em sufficient} family of bounded
positive sesquilinear forms satisfying certain invariance
properties.  CQ *-algebras of measurable operators over a finite
von Neumann algebra are also constructed and it is  proven that
any abstract CQ*-algebra $(\X,\Ao)$ possessing a sufficient family
of bounded positive tracial sesquilinear forms  can be represented
as a CQ*-algebra of this type.
\end{abstract}
\maketitle

\section{Introduction and preliminaries}

A {\em quasi *-algebra} is a couple $(\X, \A_{\scriptscriptstyle
0})$, where $\X$ is a vector space with involution $^*$,
$\A_{\scriptscriptstyle 0}$ is a *-algebra and a vector subspace
of $\X$ and $\X$ is an $\A_{\scriptscriptstyle 0}$-bimodule whose
module operations and involution extend those of
$\A_{\scriptscriptstyle 0}$. Quasi *-algebras were introduced by
Lassner \cite{lass3, lass4, schbook} to provide an appropriate
mathematical framework where discussing certain quantum physical
systems for which the usual algebraic approach made in terms of
C*-algebras revealed to be insufficient. In these applications
they usually arise by taking the completion of the C*-algebra of
observables in a weaker topology satisfying certain physical
requirements. The case where this weaker topology is a norm
topology has been considered in a series of previous papers
\cite{cquno}-\cite{standard}, where CQ*-algebras were introduced:
a CQ*-algebra is, indeed, a quasi *-algebra $(\X,
\A_{\scriptscriptstyle 0})$ where $\X$ is a Banach space with
respect to a norm $\| \ \|$ possessing an isometric involution and
$\Ao$ is a C*-algebra with respect to a norm $\| \cdot \|_\ze$,
which is dense in $\X[\| \cdot \|]$.

Since any C*-algebra $\Ao$ has a faithful *-representation $\pi$,
it is natural to pose the question if this completion also can be
realized as a quasi *-algebra of operators affiliated to
$\pi(\Ao)''$. The Segal-Nelson theory \cite{segal, nelson} of
non-commutative integration provides a number of mathematical
tools for dealing with this problem.

The paper is organized as follows. In Section 2 we consider
non-commutative $L^p$-spaces constructed starting from a von
Neumann algebra $\M$ and a normal, semifinite, faithful trace
$\tau$ as Banach quasi *-algebras. In particular if $\vp$ is
finite, then it is shown that $(L^p(\vp), \M)$ is a CQ*-algebra.
If $p \geq 2$, they even possess a {\em sufficient} family of
positive sesquilinear forms enjoying certain {\em invariance}
properties.

In Section 3, starting from a family $\FF$ of normal, finite
traces on a von Neumann algebra $\M$, we prove that the completion
of $\M$ with respect to a norm defined in natural way by the
family $\FF$ is indeed a CQ*-algebra consisting of measurable
operators, in Segal's sense, and therefore affiliated with $\M$.

Finally, in Section 4, we prove that any CQ*-algebra $(\X,\Ao)$
possessing a sufficient family of bounded positive tracial
sesquilinear forms  can be continuosly embedded into the
CQ*-algebra of measurable operators constructed in Section 3.

\bigskip
In order to keep the paper sufficiently self-contained, we collect
below some preliminary definitions and propositions that
will be used in what follows.

\bigskip Let $(\X,\Ao)$ be a quasi *-algebra. The {\it unit} of $(\X, \A_{\scriptscriptstyle 0})$ is an
element $e\in \A_{\scriptscriptstyle 0}$ such that $xe=ex=x$, for every $x \in \X$.
 A quasi *-algebra $(\X, \A_{\scriptscriptstyle 0})$ is said to
be {\it locally convex} if $\X$ is endowed with a topology $\tau$
which makes of $\X$ a locally convex space and such that the
involution $a \mapsto a^*$ and the multiplications $a \mapsto ab$,
$a \mapsto ba$, $b \in \A_{\scriptscriptstyle 0}$, are continuous. If $\tau$ is a norm
topology and the involution is isometric with respect to the norm,
we say that $(\X, \A_{\scriptscriptstyle 0})$ is a {\it normed quasi *-algebra}   and,
if it is complete, we say it is a {\it Banach quasi*-algebra}.

\begin{defn}\label{cqetc} Let $(\X, \A_{\scriptscriptstyle 0})$ be a Banach quasi *-algebra
with norm $\| \cdot \|$ and involution$\,^*$. Assume that  a
second norm $\| \cdot \|_\ze$ is defined on
$\A_{\scriptscriptstyle 0}$, satisfying the following conditions:
\begin{itemize}
\item[(a.1)]$\|a^* a\|_\ze= \|a\|^2_\ze,\quad \forall a \in
\A_{\scriptscriptstyle 0}$;
\item[(a.2)]$\|a\|\leq \|a\|_\ze, \quad \forall a \in \Ao$;
\item[(a.3)]$\|ax\| \leq \|a\|_\ze \|x\|, \quad \forall a \in
\A_{\scriptscriptstyle 0}, x \in \X$;
\item[(a.4)]$\Ao[\| \cdot \|_\ze]$ is complete.
\end{itemize}
Then we say that $(\X, \A_{\scriptscriptstyle 0})$ is a
CQ*-algebra.
\end{defn}

\berem (1) If $\Ao[\| \cdot \|_\ze]$ is not complete, we say that
$(\X, \A_{\scriptscriptstyle 0})$ is a pre CQ*-algebra.

\noindent(2) In previous papers the name CQ*-algebra was given to
a more complicated structure where two different involutions were
considered on $\Ao$. When these involutions coincide, we spoke of
a {\em proper} CQ *-algebra. In this paper only this case will be
considered and so we systematically omit the term {\em
proper}.\enrem

\bigskip
The following basic definitions and results on non-commutative
measure theory are also needed in what follows.

 Let $\M$ be a von Neumann
algebra and $\vp$ a normal faithful semifinite trace defined on
$\M_+$.

Put $${\mc J}=\{ X \in \M: \vp(|X|)<\infty \}.$$ ${\mc J}$ is a
*-ideal of $\M$.

We denote with ${\rm Proj}{(\M)}$, the lattice of projections of
$\M$.
\begin{defn}A vector subspace $\D$ of $\HH$ is said to be strongly
dense ( resp., strongly $\vp$-dense) if
\begin{itemize}
\item $U'\D \subset \D$ for any unitary $U'$ in $\M'$
\item there exists a sequence $P_n \in {\rm Proj}{(\M)}$: $P_n \HH
\subset \D$ , $P_n^\perp \downarrow 0$ and ${(P_n^\perp)}$ is a
finite projection (resp., $\vp{(P_n^\perp)}<\infty$).
\end{itemize}
\label{meassegal}
\end{defn}

Clearly, every strongly $\vp$-dense domain is strongly dense.

Throughout this paper, when we say that an operator $T$ is
affiliated with a von Neumann algebra, written $T \aff \M$, we always mean that $T$ is closed, densely defined and $TU\supseteq UT$ for every unitary operator $U \in \M'$.
\begin{defn}
An operator $T \aff \M$ is called
\begin{itemize}
\item measurable (with respect to
$\M$) if its domain $D(T)$ is strongly dense;
\item $\vp$-measurable if its domain
$D(T)$ is strongly $\vp$-dense. \label{meaurability}
\end{itemize}
\end{defn}
>From the definition itself it follows that, if $T$ is
$\vp$-measurable, then there exists $P \in {\rm Proj}{(\M)}$ such
that $TP$ is bounded and $\vp(P^\perp)<\infty$. \label{main}

We remind that any operator affiliated with a finite von Neumann
algebra is measurable \cite[Cor. 4.1]{segal} but it is not
necessarily $\vp$-measurable.

\section{Non-commutative $L^p$-spaces as CQ *-algebras}

In this Section we will discuss the structure of the
non-commutative $L^p$-spaces as quasi *-algebras. We begin with
recalling the basic definitions.

Let $\M$ be a von Neumann algebra and $\vp$ a normal faithful
semifinite trace defined on $\M_+$. For each $p\geq 1$, let
$${\mc J_p}=\{ X \in \M: \vp(|X|^p)<\infty \}.$$ Then ${\mc J_p}$ is a
*-ideal of $\M$. Following \cite{nelson}, we denote with
$L^p(\vp)$ the Banach space completion of ${\mc J_p}$ with respect
to the norm
$$\|X\|_p := \vp(|X|^p)^{1/p}, \quad X \in {\mc J_p}.$$
One usually defines $L^\infty(\vp) = \M$. Thus, if $\vp$ is a
finite trace, then $L^\infty(\vp) \subset L^p(\vp)$ for every
$p\geq 1$. As shown in \cite{nelson}, if $X \in L^p(\vp)$, then
$X$ is a measurable operator.
\begin{prop} Let $\M$ be a von Neumann
algebra and $\vp$ a normal faithful semifinite trace on $\M_+$.
Then $(L^p(\vp), L^\infty(\vp)\cap L^p(\vp))$ is a Banach quasi
*-algebra.

If $\vp$ is a finite trace and $\vp(\id)=  1$, then $(L^p(\vp),
L^\infty(\vp))$ is a CQ*-algebra.
\end{prop}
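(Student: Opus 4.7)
The plan is to verify directly the axioms of a Banach quasi *-algebra, and for the second assertion the four conditions of Definition~\ref{cqetc}. The first observation is that $\Ao := L^\infty(\vp)\cap L^p(\vp)$ coincides with the *-ideal $\mc J_p$ of $\M$; it is therefore automatically a *-algebra, and it is dense in $L^p(\vp)$ by the very construction of $L^p(\vp)$ as the $\|\cdot\|_p$-completion of $\mc J_p$.

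For the Banach quasi *-algebra structure two things need to be checked. First, the involution $X \mapsto X^*$ of $\mc J_p$ extends isometrically to $L^p(\vp)$: the tracial property of $\vp$ yields $\vp((X^*X)^{p/2}) = \vp((XX^*)^{p/2})$, hence $\|X^*\|_p = \|X\|_p$ on $\mc J_p$, and the involution extends by density. Second, for $a \in \mc J_p$ and $x \in \mc J_p$ the operator product $ax$ is again in $\mc J_p$, and the noncommutative Hölder inequality
\[
\|ax\|_p \leq \|a\|_\infty \|x\|_p, \qquad \|xa\|_p \leq \|a\|_\infty \|x\|_p,
\]
shows that each of the maps $x \mapsto ax$ and $x \mapsto xa$ is bounded on the dense subspace $\mc J_p$ and so extends by continuity to all of $L^p(\vp)$. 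Associativity, bilinearity and compatibility with the involution are inherited from $\M$ on $\mc J_p$ and pass to the completion by density, so $(L^p(\vp), \Ao)$ is a Banach quasi *-algebra.

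In the finite case $\vp(\id)=1$ one has $L^\infty(\vp) \subset L^p(\vp)$ and $\Ao = L^\infty(\vp) = \M$. Condition (a.1) is the C*-identity of $\M$, and (a.4) is the norm-completeness of $\M$. For (a.2), spectral calculus together with monotonicity of $\vp$ gives $\vp(|a|^p) \leq \|a\|_\infty^{p}\,\vp(\id) = \|a\|_\infty^{p}$, i.e. $\|a\|_p \leq \|a\|_\infty$. Axiom (a.3) is exactly the Hölder estimate already used above.

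The one delicate point is making sense of the bimodule action for arbitrary $x \in L^p(\vp)$: a priori $x$ is only an abstract element of a Banach completion, so one must identify $L^p(\vp)$ with a concrete space of measurable operators affiliated with $\M$ (via the Nelson theory recalled in Section~1) and verify that the continuous extensions of left and right multiplication by $a \in \Ao$ coincide with the operator products $ax$, $xa$ taken in this measurable-operator calculus. Apart from this identification the verifications are routine, the tracial property of $\vp$ and Hölder's inequality carrying the essential content.
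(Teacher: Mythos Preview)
Your proof is correct and follows the same route as the paper's: direct verification of the Banach quasi *-algebra axioms and of conditions (a.1)--(a.4) in Definition~\ref{cqetc}. The paper's own argument is in fact considerably terser---it simply asserts that the conditions are ``easily seen'' and notes $L^\infty(\vp)\subset L^p(\vp)$ when $\vp$ is finite---so your write-up supplies the details (isometry of the involution via the trace property, the H\"older estimate for the module action, and the inequality $\|a\|_p\le\|a\|_\infty$ from $\vp(\id)=1$) that the paper leaves implicit; your closing remark about identifying the abstract completion with measurable operators is likewise handled in the paper by a reference to Nelson just before the proposition.
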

\begin{proof} Indeed, it is easily seen that the norms $\| \cdot \|_\infty$ of $L^\infty(\vp)\cap
L^p(\vp)$ and $\| \cdot \|_p$ on $L^p(\vp)$ satisfy the conditions
(a.1)-(a.2) of Definition \ref{cqetc}. Moreover, if $\vp$ is
finite, then $L^\infty(\vp) \subset L^p(\vp)$ and thus $(L^p(\vp),
L^\infty(\vp))$ is a CQ*-algebra.
\end{proof}
\berem Of course the condition $\vp(\id)=  1$ can be easily
removed by rescaling the trace. \enrem
 \begin{defn}Let $(\X,\Ao)$ be a Banach quasi *-algebra.
We denote with $\mc S(\X) $ the set of all sesquilinear forms
$\Omega$ on $\X \times \X$ with the following properties
\begin{itemize}
\item[(i)]$ \Omega(x,x) \geq 0 \hspace{3mm} \forall x \in \X$
\item[(ii)] $\Omega(xa,b) = \Omega (a,x^*b)  \hspace{3mm} \forall x \in
\X,\hspace{2mm} \forall a,b \in \Ao$
\item[(iii)]$ |\Omega(x,y)| \leq \norm{x}
\norm{y} \hspace{3mm} \forall x,y\in \X$ .\end{itemize}
  A subfamily ${\mc A}$ of $\mc S(\X) $ is called {\em sufficient}  if $x \in
\X$,
 $\Omega(x,x)=0 $, for every $\Omega \in {\mc A}$,
implies $x=0$. \label{SLP}
\end{defn}

If $(\X,\Ao)$ is a Banach quasi *-algebra, then the Banach dual
space $\X^\sharp$ of $\X$ can be made into a Banach $\Ao$-bimodule
with norm
$$
\|f\|^\sharp = \sup_{\|x\|\leq 1} |\frm{x}{f}|, \quad f \in
\X^\sharp,
$$ by defining, for $f \in \X^\sharp$, $a \in \Ao$, the module operations in the following way:
$$ \frm{x}{f\circ a}:= \frm{ax}{f} , \quad x \in \X$$
$$ \frm{x}{a\circ f}:= \frm{xa}{f}, \quad x \in \X.$$
As usual, an involution $f \mapsto f^*$ can be defined on $\X^\sharp$ by $\frm{x}{f^*}= \overline{\frm{x^*}{f}}$, $x \in \X$.\\
With these notations we can easily prove the following (see, also
\cite{tratri}):
\begin{prop}\label{repr} $(\X,\Ao)$ be a  Banach quasi *-algebra and $\Omega$ a positive sesquilinear form on $\X \times \X$. The following
statements are equivalent:
\begin{itemize}
\item[(i)] $\Omega \in \mc S(\X) $;
\item[(ii)] there exists a bounded conjugate linear operator $T:\X
\to \X^\sharp$ with the properties
\begin{itemize}
\item[(ii.1)] $\frm{x}{Tx}\, \geq 0, \quad \forall x \in \X;$
\item[(ii.2)] $T(ax)= (Tx)\circ a^* , \quad \forall a \in \Ao, \, x
\in \X;$
\item[(ii.3)] $\|T\|_{\scriptscriptstyle{{\mc B}(\X,\X^\sharp)}}\leq
1;$
\item[(ii.4)] $\Omega(x,y)= \frm{x}{Ty}, \quad \forall x,y \in \X.$
\end{itemize}
\end{itemize}
\end{prop}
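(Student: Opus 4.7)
For (i) $\Rightarrow$ (ii), given $\Omega \in \mc S(\X)$, I would define $T: \X \to \X^\sharp$ by $\frm{x}{Ty} := \Omega(x, y)$ for $x, y \in \X$. Sesquilinearity of $\Omega$ makes $T$ a well-defined conjugate-linear map, and property (iii) of Definition \ref{SLP} yields $\|Ty\|^\sharp \leq \|y\|$, which is (ii.3); conditions (ii.1) and (ii.4) are built into the definition. The delicate point is (ii.2). By the bimodule action on $\X^\sharp$, the identity $T(ax) = (Tx)\circ a^*$ in $\X^\sharp$ is equivalent to $\Omega(z, ax) = \Omega(a^* z, x)$ for all $z \in \X$. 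Since both sides are continuous in $z$ and $\Ao$ is norm-dense in $\X$, it suffices to verify the identity for $z \in \Ao$; since moreover left multiplication by $a \in \Ao$ is continuous on $\X$, a further approximation reduces to the case $x \in \Ao$ as well. For $z, x, a \in \Ao$, the desired equality is exactly the complex conjugate of property (ii) of Definition \ref{SLP} applied with $x$, $a$, $b$ replaced by $a$, $x$, $z$ respectively.

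For (ii) $\Rightarrow$ (i), set $\Omega(x, y) := \frm{x}{Ty}$. Sesquilinearity is immediate from the linearity of the pairing in the first slot and the conjugate linearity of $T$; positivity (i) of Definition \ref{SLP} is (ii.1); and the bound (iii) follows from $|\frm{x}{Ty}| \leq \|x\|\,\|Ty\|^\sharp \leq \|x\|\,\|y\|$ via (ii.3). For invariance (ii), fix $a, b \in \Ao$ and $x \in \X$, and approximate $x$ by $x_n \in \Ao$. Applying (ii.2) with $x_n^*$ in the role of ``$a$'' and $b$ in the role of ``$x$'' yields $T(x_n^* b) = (Tb)\circ x_n$; pairing with $a$ and unwinding the bimodule action on $\X^\sharp$ gives $\Omega(a, x_n^* b) = \frm{x_n a}{Tb} = \Omega(x_n a, b)$. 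Continuity of $\Omega$, of the involution, and of right multiplication by $b$ on $\X$ then allows passage to the limit, yielding $\Omega(xa, b) = \Omega(a, x^* b)$.

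The main obstacle in both directions is the same: the algebraic identities in play involve products such as $ax$ or $x^*b$ where one factor is a general element of $\X$, while the hypothesis supplies the analogous identities only when all factors lie in $\Ao$. The gap is bridged each time by a density argument, leveraging the density of $\Ao$ in $\X$, the continuity of $\Omega$ from property (iii), and the joint continuity of the $\Ao$-bimodule action, all built into the Banach quasi *-algebra structure.
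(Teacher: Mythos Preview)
The paper does not actually supply a proof of this proposition: it merely states that ``we can easily prove the following (see, also \cite{tratri})'' and moves on. So there is no argument in the paper to compare your proposal against. That said, your proof is correct and is presumably what the authors have in mind: defining $T$ via $\frm{x}{Ty}=\Omega(x,y)$ is the natural (and essentially only) choice, and the density/continuity reductions you carry out to pass between the invariance condition in Definition~\ref{SLP} and property~(ii.2) are exactly what is needed to bridge the gap between identities stated for $a,b\in\Ao$ and identities involving general $x\in\X$.
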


 We will now focus our attention on the question
as to whether for the Banach quasi *-algebra
$(L^p(\vp),L^\infty(\vp)\cap L^p(\vp))$, the family ${\mc
S}(L^p(\vp))$, that we are going to describe, is or is not
sufficient.

Before going forth, we remind that many of the familiar results of
the ordinary theory of $L^p$-space hold in the very same form for
the non-commutative $L^p$-spaces. This is the case, for instance,
of H\"older's inequality and also of the statement that
characterizes the dual of $L^p$: the form defining the duality is
the extension of $\vp$ (this extension will be denoted with the
same symbol) to products of the type $XY$ with $X \in L^p(\vp)$,
$Y \in L^{p'}(\vp)$ with $p^{-1}+{p'}^{-1}=1$ and one has
$(L^p(\vp))^\sharp \simeq L^{p'}(\vp)$.

In order to study ${\mc S}(L^p(\vp))$, we introduce, for $p\geq
2$, the following notation
$${\mathcal{B}_+^p}=\{X\in L^{p/(p-2)}(\vp),\, X\geq0\, , \, \, \norm{X}_{p/(p-2)}\leq 1\} $$
where $ p/{(p-2)}=\infty\,$ if $p=2.$

For each $W \in {\mc B}_+^p$, we consider the right multiplication operator
$$ R_{\scriptscriptstyle W}:L^p(\vp) \to L^{\frac{p}{p-1}}(\vp); \hspace{8mm} R_{\scriptscriptstyle W} X= XW, \quad X \in L^p(\vp).$$
Since $L^\infty(\vp)\cap L^p(\vp)= {\mc J}_p$, we use, for
shortness, the latter notation.

\begin{lemma}\label{2.4} The following statements hold.
\begin{itemize}
\item[(i)] Let $p\geq2$. For every
$W\in {\mathcal{B}_+^p}$, the sesquilinear form
$\Omega(X,Y)=\vp[X(R_{\scriptscriptstyle W}Y){^\ast}] $ is an
element of $\mathcal{S}(L^p(\vp))$
\item[(ii)]If $\vp$ is finite, then for each $\Omega \in {\mc S}(L^p(\vp))$, there exists $W \in {\mc B}_+^p$ such that
$$\Omega(X,Y)=\vp[X(R_{\scriptscriptstyle W}Y){^\ast}],\quad \forall X,Y \in L^p(\vp).$$
\end{itemize}
\end{lemma}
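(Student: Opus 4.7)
For part (i), writing $\Omega(X,Y) = \varphi(XWY^*)$ (using $W = W^*$ since $W \geq 0$), the plan is to check the three defining properties of Definition \ref{SLP} directly from cyclicity of $\varphi$ and the non-commutative Hölder inequality. Positivity follows from
$$\Omega(X,X) = \varphi(XWX^*) = \varphi\bigl((XW^{1/2})(XW^{1/2})^*\bigr) = \|XW^{1/2}\|_2^2 \geq 0;$$
the $\Ao$-invariance $\Omega(Xa,b) = \Omega(a,X^*b)$ is cyclicity applied to $\varphi(XaWb^*)$; and the bound $|\Omega(X,Y)| \leq \|X\|_p\|Y\|_p$ comes from two applications of Hölder together with $\|W\|_{p/(p-2)} \leq 1$.

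For part (ii), the plan is to use Proposition \ref{repr} to obtain a conjugate-linear bounded operator $T: L^p(\varphi) \to L^p(\varphi)^\sharp \simeq L^{p/(p-1)}(\varphi)$ representing $\Omega$. Since $\varphi$ is finite, $\mathcal{J}_p = \M$ contains the unit $e$, so I set $W := T(e) \in L^{p/(p-1)}(\varphi)$. Computing the right $\Ao$-action on $L^{p/(p-1)}(\varphi)$ through the trace pairing $\langle X,Z\rangle = \varphi(XZ)$ shows that $f \circ a$ is right multiplication by $a$; specializing (ii.2) to $X = e$ then gives $T(a) = W a^*$ for every $a \in \M$, and hence $\Omega(X,Y) = \varphi(XWY^*)$ at least for $X,Y \in \M$. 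The Hermitian symmetry of $\Omega$ together with cyclicity and faithfulness of $\varphi$ forces $W = W^*$, while taking $X = Y$ in the positivity of $\Omega$ and using cyclicity yields $\varphi(WB) \geq 0$ for all $B \in \M_+$, giving $W \geq 0$.

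The main obstacle is to upgrade $W$ from $L^{p/(p-1)}(\varphi)$ to $L^{p/(p-2)}(\varphi)$ with $\|W\|_{p/(p-2)} \leq 1$. The plan is to exploit the non-commutative duality $L^{p/2}(\varphi)^\sharp \simeq L^{p/(p-2)}(\varphi)$. Given $Z \in \M$ with $\|Z\|_{p/2} \leq 1$, I polar-decompose $Z = U|Z|$ with $U \in \M$ and factor $Z = Y^*X$ by setting $X := |Z|^{1/2}$ and $Y := |Z|^{1/2}U^*$. A short computation using $U^*U|Z| = |Z|$ shows that both $X$ and $Y$ lie in $\M$ with $L^p$-norms equal to $\|Z\|_{p/2}^{1/2} \leq 1$. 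Cyclicity then gives
$$|\varphi(WZ)| = |\varphi(XWY^*)| = |\Omega(X,Y)| \leq \|X\|_p\|Y\|_p \leq 1,$$
and density of $\M$ in $L^{p/2}(\varphi)$ together with the duality yields $\|W\|_{p/(p-2)} \leq 1$.

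With $W \in L^{p/(p-2)}(\varphi)$ in hand, the non-commutative Hölder inequality makes $(X,Y) \mapsto \varphi(XWY^*)$ jointly continuous on $L^p(\varphi) \times L^p(\varphi)$. Since it already coincides with $\Omega$ on the dense subspace $\M \times \M$, it extends by continuity to all of $L^p(\varphi) \times L^p(\varphi)$, completing the representation $\Omega(X,Y) = \varphi(X(R_W Y)^*)$.
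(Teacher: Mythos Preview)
Your proof is correct and, for part (i), essentially matches the paper's argument: both verify the three conditions of Definition~\ref{SLP} directly via cyclicity of $\varphi$ and H\"older's inequality (you use the factorization $\varphi(XWX^*)=\|XW^{1/2}\|_2^2$, the paper rewrites it as $\varphi(W|X|^2)$, which is equivalent).

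For part (ii) the starting point is the same---invoke Proposition~\ref{repr} to obtain $T$ and set $W=T(\mathbb{I})$---but the level of detail differs markedly. The paper simply asserts that ``it is easy to check that $R_W=T$'' and stops there; in particular it does not explicitly verify that $W$, which a priori lies only in $L^{p/(p-1)}(\varphi)$, actually belongs to $L^{p/(p-2)}(\varphi)$ with $\|W\|_{p/(p-2)}\leq 1$, nor that $W\geq 0$. Your argument fills precisely these gaps: the polar-decomposition factorization $Z=Y^*X$ with $\|X\|_p,\|Y\|_p\leq \|Z\|_{p/2}^{1/2}$ together with the duality $(L^{p/2})^\sharp\simeq L^{p/(p-2)}$ is exactly the right tool to upgrade the integrability of $W$, and your positivity argument via $\varphi(WB)\geq 0$ for $B\in\M_+$ is the standard route. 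So the two proofs share the same skeleton, but yours supplies a genuine technical step that the paper leaves implicit; what you gain is a self-contained argument, at the cost of a little extra length.
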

\begin{proof}(i):
We check that the sesquilinear
$\Omega(X,Y)=\vp[X(R_{\scriptscriptstyle W}Y){^\ast}]$, $X,Y \in
L^p(\vp)$ satisfies the conditions (i),(ii),(iii) of Definition
\ref{SLP}.

\noindent For every $X\in L^p(\vp)$ we have $$\Omega(X,X)=\vp
[X(R_{\scriptscriptstyle W}X)^\ast]=\vp [X(XW)^{\ast}]=\vp
[(XW)^{\ast}X]=\vp [W|X|^2]\geq 0.$$ For every $X \in L^p(\vp)$,
$A ,B\in {\mc J}_p$, we get
$$\Omega(XA,B)=\vp (XA(BW)^{\ast})=\vp
(WB^{\ast}XA)=\vp(A(X^{\ast}BW)^{\ast})=\Omega(A,X^
 {\ast}B).$$
Finally, for every $ X,Y\in L_p(\vp)$,
$$|\Omega(X,Y)|\leq\norm{X}_p\norm{Y}_{p}\norm{W}_{p/p-2}\leq
\norm{X}_p\norm{Y}_p.$$ (ii) Let $\Omega \in {\mc S}(L^p(\vp))$.
Let $T: L^p(\vp) \to L^{p'}(\vp)$ be the operator which represents
$\Omega$ in the sense of Proposition \ref{repr}. The finiteness of
$\vp$ implies that  ${\mc J_p}=\M$; thus we can put $W= T({\mb
I})$. It is easy to check that $R_{\scriptscriptstyle W}=T$. This
concludes the proof.
\end{proof}

\begin{prop}If $p\geq 2$, ${\mc S}(L^p(\vp))$ is sufficient.
\end{prop}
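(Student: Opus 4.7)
The plan is to use the forms produced by Lemma~\ref{2.4}(i) together with the spectral theorem for $\vp$-measurable operators. Suppose $X \in L^p(\vp)$ satisfies $\Omega(X,X)=0$ for every $\Omega \in \mathcal{S}(L^p(\vp))$. Applying this in particular to the forms $\Omega_W(Y,Z)=\vp[Y(R_{\scriptscriptstyle W}Z)^\ast]$ with $W \in \mathcal{B}_+^p$, and recalling from the proof of Lemma~\ref{2.4}(i) that $\Omega_W(X,X)=\vp(W|X|^2)$, we obtain
$$\vp(W|X|^2)=0, \qquad \forall\, W \in \mathcal{B}_+^p.$$
Set $A:=|X|^2$. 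Then $A$ is a nonnegative $\vp$-measurable operator and $A\in L^{p/2}(\vp)$, since $\vp(A^{p/2})=\vp(|X|^p)=\|X\|_p^p<\infty$. The task is thus reduced to showing that $A\geq 0$, $A \in L^{p/2}(\vp)$ and $\vp(WA)=0$ for all positive $W$ in the unit ball of $L^{p/(p-2)}(\vp)$ force $A=0$.

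For this I would use the spectral resolution of $A$. For each $\lambda>0$ let $E_\lambda:=\chi_{[\lambda,\infty)}(A)$, a projection in $\M$ affiliated with the abelian von Neumann algebra generated by $A$. From the operator inequality $\lambda^{p/2}E_\lambda \leq A^{p/2}$ and the positivity of $\vp$, one gets the Chebyshev-type bound
$$\vp(E_\lambda) \leq \lambda^{-p/2}\,\vp(A^{p/2}) < \infty,$$
so $E_\lambda \in \mathcal{J}\subset L^{p/(p-2)}(\vp)$ with $\|E_\lambda\|_{p/(p-2)}=\vp(E_\lambda)^{(p-2)/p}$. If $A\neq 0$, faithfulness of $\vp$ and the fact that $E_\lambda \uparrow \mathrm{supp}(A)$ as $\lambda \downarrow 0$ imply that $E_\lambda \neq 0$ for some $\lambda>0$. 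After rescaling by $\|E_\lambda\|_{p/(p-2)}^{-1}$ we obtain a $W \in \mathcal{B}_+^p$, and
$$\vp(WA) \geq \lambda\, \vp(E_\lambda)/\|E_\lambda\|_{p/(p-2)} > 0,$$
contradicting our hypothesis. Hence $A=|X|^2=0$ and so $X=0$, proving sufficiency.

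The only delicate point is the Chebyshev-type bound $\lambda^{p/2}E_\lambda\leq A^{p/2}$ together with the finiteness of $\vp(E_\lambda)$; but these are standard consequences of functional calculus for $\vp$-measurable operators and the normality of the trace, and do not require the finiteness of $\vp$. Everything else is a direct combination of Lemma~\ref{2.4}(i) with the duality $L^{p/2}(\vp)^\sharp \simeq L^{p/(p-2)}(\vp)$ restricted to positive elements.
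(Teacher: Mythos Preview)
Your argument is correct, but the paper takes a shorter route. Both proofs reduce to the identity $\Omega_W(X,X)=\vp(W|X|^2)$ from Lemma~\ref{2.4}(i); the difference lies only in the choice of the test element $W\in\mathcal{B}_+^p$. The paper simply takes $W=|X|^{p-2}/\alpha$ with $\alpha$ chosen so that $\|\,|X|^{p-2}/\alpha\,\|_{p/(p-2)}\le 1$; since $\vp\big((|X|^{p-2})^{p/(p-2)}\big)=\vp(|X|^p)<\infty$, this $W$ lies in $\mathcal{B}_+^p$, and one obtains directly $\Omega(X,X)=\vp(|X|^p)/\alpha=0$, hence $X=0$ by faithfulness of $\vp$. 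Your detour through the spectral projections $E_\lambda=\chi_{[\lambda,\infty)}(|X|^2)$ and the Chebyshev bound works, but it is unnecessary machinery: the single choice $W\propto|X|^{p-2}$ already separates $X$ from $0$. What your version does illustrate, as a side benefit, is that the subfamily of forms coming from \emph{projections} in $\mathcal{B}_+^p$ is already sufficient; the paper's argument does not make this explicit. A minor remark: faithfulness of $\vp$ is not needed to get $E_\lambda\neq 0$ for some $\lambda$ (that is pure spectral theory); where you actually use it is to ensure $\vp(E_\lambda)>0$, so that the normalization and the final inequality are nontrivial.
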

\begin{proof}
Let $X\in L^p(\vp)$ be such that $ \Omega(X,X)=0 $\, for every
$\Omega$ $\in \mathcal{S}(L_p(\vp))$. By the previous lemma, since
$|X|^{p-2}\in L^{\tfrac{p}{p-2}}(\vp)$, the right multiplication
operator $R_{\scriptscriptstyle W}$ with
$W={\tfrac{|X|^{p-2}}{\alpha}}$,  $ \alpha \in\mathbb{R}$
satisfying $\norm{\tfrac{|X|^{p-2}}{\alpha}}_{p/p-2}\leq 1,$
represents a sesquilinear form $\Omega\in \mathcal{S}(L_p(\vp))$.
By the assumption, $ \Omega(X,X)=0$. We then have
$$\Omega(X,X)=\vp
[X(R_{\scriptscriptstyle W} X)^{\ast}]=\tfrac{\vp
[X(X|X|^{p-2})^{\ast}]}{\alpha}=\tfrac{\vp
[(X|X|^{p-2})^{\ast}X]}{\alpha}=\tfrac{\vp[|X|^p]}{\alpha}=0
\Rightarrow X=0,$$ by the faithfulness of $\vp$.
\end{proof}

\section{CQ*-algebras over finite von Neumann algebras}
Let $\M$ be a von Neumann algebra and $\FF=\{\vp_\alpha;\, \alpha
\in {\mc I}\}$ be a family of normal, {\it finite} traces on $\M$.
As usual, we say that the family $\FF$ is {\it sufficient} if for
$X \in \M,\, X \geq 0$ and $\vp_\alpha(X)=0$ for every $\alpha \in
{\mc I}$, then $X=0$ (clearly, if $\FF=\{\vp\}$, then $\FF$ is
sufficient if, and only if, $\vp$ is faithful). In this case, $\M$
is a finite von Neumann algebra \cite[ch.7]{stratila}. We assume,
in addition, that the following condition (P) is satisfied:
$$ \mbox{(P)} \hspace{1cm}
\vp_\alpha ({\mb I})\leq 1, \quad \forall \alpha \in {\mc I}.$$
Then we define
$$ \|X\|_{p,{\mc I}}= \sup_{\alpha \in {\mc I}}\|X\|_{p,\vp_\alpha}= \sup_{\alpha \in {\mc I}} \vp_\alpha(|X|^p)^{1/p}.$$
Since $\FF$ is sufficient, $\| \cdot \|_{p,{\mc I}}$ is a norm on
$\M$.

In the sequel we will need the following Lemmas whose simple
proofs will be omitted.

\begin{lemma}\label{thelemma}
Let $\mathfrak{M}$ be a von Neumann algebra in Hilbert space
$\HH$, $\{P_\alpha\}_{\alpha\in\mathcal{I}}$ a
family of projections of $\mathfrak{M}$ with
$$\bigvee_{\alpha\in\mathcal{I}}P_\alpha=\overline{P}.$$
If $A\in\mathfrak{M}$ and $ AP_\alpha=0 $ for every
$\alpha\in\mathcal{I}$, then $A\overline{P}=0.$
\end{lemma}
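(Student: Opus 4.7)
My approach is to observe that $AP_\alpha = 0$ forces the range of $P_\alpha$ to be contained in $\ker A$, and then to exploit the standard description of the supremum of a family of orthogonal projections as the projection onto the closed linear span of their ranges. Since $A \in \mathfrak{M} \subset \BH$ is bounded, $\ker A$ is a closed subspace of $\HH$; the hypothesis $AP_\alpha = 0$ is then precisely the statement that $P_\alpha \HH \subseteq \ker A$ for every $\alpha \in \mathcal{I}$.

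Next I would invoke the fact that for any family $\{P_\alpha\}_{\alpha \in \mathcal{I}}$ of orthogonal projections, the join $\bigvee_\alpha P_\alpha$ (computed in the projection lattice of $\BH$, or equivalently of $\mathfrak{M}$) is the orthogonal projection onto the closed linear span $\overline{\mathrm{span}}\{P_\alpha \HH : \alpha \in \mathcal{I}\}$. Since $\ker A$ is a closed subspace containing every $P_\alpha \HH$, it contains this closed linear span, so $\overline{P}\,\HH \subseteq \ker A$, and hence $A \overline{P} = 0$.

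Equivalently, one may repackage the argument through support projections: the projection $E \in \mathfrak{M}$ onto $\ker A$ (namely $E = \id - s(A^*A)$, where $s$ denotes the range projection) is an upper bound in the projection lattice for every $P_\alpha$, and therefore for $\overline{P} = \bigvee_\alpha P_\alpha$; thus $\overline{P} \leq E$ and $A\overline{P} = AE\overline{P} = 0$. I do not anticipate any real obstacle here: the lemma reduces to combining the elementary equivalence $AQ = 0 \iff Q\HH \subseteq \ker A$ with the defining property of the supremum of projections.
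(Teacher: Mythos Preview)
Your argument is correct: the equivalence $AQ=0 \iff Q\HH \subseteq \ker A$ together with the characterization of $\bigvee_\alpha P_\alpha$ as the projection onto the closed linear span of the ranges immediately yields the conclusion, and the reformulation via the kernel projection $E$ is an equally valid packaging of the same idea. The paper itself does not give a proof of this lemma (it explicitly states that the simple proofs of this and the following lemma are omitted), so there is nothing to compare against; your write-up is exactly the kind of routine verification the authors had in mind.
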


\begin{lemma}\label{lemma2}

Let $\FF=\{\vp_\alpha\}_{\alpha\in\mathcal{I}}$ be a sufficient
family of normal, finite traces on the von Neumann algebra
$\mathfrak{M}$ and let $P_\alpha$ be the support of $\vp_\alpha.$
Then, $\vee P_\alpha=\mathbb{I}$, where $\mathbb{I}$ denotes the
identity of $\mathfrak{M}.$
\end{lemma}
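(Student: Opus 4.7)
The plan is short. Set $Q = \bigvee_{\alpha \in \mc I} P_\alpha$ and aim to show $Q^\perp := \id - Q = 0$ by applying the sufficiency assumption on $\FF$ to the positive element $Q^\perp$.

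First I would recall the defining property of the support: for a normal positive functional (in particular the trace $\vp_\alpha$) on $\M$, the support projection $P_\alpha$ satisfies $\vp_\alpha(P_\alpha) = \vp_\alpha(\id)$, equivalently $\vp_\alpha(P_\alpha^\perp) = 0$. Since $P_\alpha \leq Q$, we have $Q^\perp \leq P_\alpha^\perp$ in the usual order of self-adjoint operators, and positivity of $\vp_\alpha$ gives
\[
0 \leq \vp_\alpha(Q^\perp) \leq \vp_\alpha(P_\alpha^\perp) = 0,
\]
so $\vp_\alpha(Q^\perp) = 0$ for every $\alpha \in \mc I$.

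Finally, $Q^\perp \geq 0$ and $\vp_\alpha(Q^\perp) = 0$ for every $\alpha$, so the sufficiency of $\FF$ forces $Q^\perp = 0$, i.e.\ $\bigvee P_\alpha = \id$, as required.

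There is essentially no obstacle: the only conceptual ingredient is the characterization of the support projection, and from there the argument is a one-line comparison of projections followed by the sufficiency hypothesis. Note that Lemma \ref{thelemma} is not needed for this particular statement; it is presumably recorded for use elsewhere (for instance, to deduce that certain operators annihilate $\id$ once they annihilate each $P_\alpha$).
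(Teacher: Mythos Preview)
Your argument is correct: passing to $Q^\perp = \id - \bigvee P_\alpha$, using $\vp_\alpha(P_\alpha^\perp)=0$ together with $Q^\perp \le P_\alpha^\perp$, and then invoking sufficiency of $\FF$ is exactly the natural proof. The paper itself omits the proof (it states that the ``simple proofs will be omitted''), so there is nothing to compare against; your write-up would serve as a suitable justification, and your remark that Lemma~\ref{thelemma} is not needed here is also accurate.
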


\medskip
It is well-known that the support of each $\vp_\alpha$ enjoy the
following properties
\begin{itemize}
  \item[(i)] $P_\alpha\in {\mc Z}(\M)$, the center of $\M$, for
  each $\alpha \in I$;
  \item[(ii)] $\vp_\alpha (X) = \vp_\alpha (XP_\alpha)$, for
  each $\alpha \in
  I$.
\end{itemize}

>From the two preceding lemmas it follows that, if the $P_\alpha$'s
are as in Lemma \ref{lemma2}, then
$$AP_\alpha=0 \quad \forall\alpha\in\mathcal{I} \,\Rightarrow A=0.$$

 If Condition (P) is fulfilled, then
$$\norm{X}_{p,\mathcal{I}}= \sup_{\alpha\in\mathcal{I}}\norm{XP_\alpha}_{p,\alpha} \quad \quad \forall X\in\mathfrak{M} $$
Clearly, the sufficiency of the family of traces and Condition (P)
imply that $\norm{\cdot}_{p,\mathcal{I}}$ is a norm
$\mathfrak{M}.$

\begin{prop} \label{nota}Let ${\mathfrak{M}}(p,{\mc I})$ denote the Banach space completion
of $\mathfrak{M}$ with respect to the norm
$\norm{\cdot}_{p,\mathcal{I}}$. Then $({\mathfrak{M}}(p,{\mc
I})[\norm{\cdot}_{p,\mathcal{I}}],
\mathfrak{M}[\norm{\cdot}_{\scriptscriptstyle\BH}] )$ is a
CQ*-algebra.
\end{prop}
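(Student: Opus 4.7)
The plan is to verify in turn conditions (a.1)--(a.4) of Definition \ref{cqetc}, after first establishing that $(\mathfrak{M}, \mathfrak{M})$ is a normed quasi *-algebra with respect to the pair of norms $\|\cdot\|_{p,\mathcal{I}}$ and $\|\cdot\|_{\scriptscriptstyle\BH}$, so that passing to the completion yields a Banach quasi *-algebra. The crucial inputs are: (P), the sufficiency of $\FF$ (which was already noted to make $\|\cdot\|_{p,\mathcal{I}}$ a norm), the trace property of each $\vp_\alpha$, and the standard Hölder-type estimates for the non-commutative $L^p$-norms associated with a finite trace.

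First I would check the ``pointwise'' properties for each fixed $\alpha \in \mathcal{I}$. The involution is isometric with respect to $\|\cdot\|_{p,\vp_\alpha}$: writing the polar decomposition $X = U|X|$ in $\mathfrak{M}$, one has $|X^*|^p = U|X|^p U^*$, so by the trace property $\vp_\alpha(|X^*|^p) = \vp_\alpha(|X|^p U^*U) = \vp_\alpha(|X|^p)$. Taking $\sup_\alpha$, the involution is $\|\cdot\|_{p,\mathcal{I}}$-isometric on $\mathfrak{M}$. Next, condition (P) together with $|X|^p \leq \|X\|_{\scriptscriptstyle\BH}^p\,\id$ gives $\vp_\alpha(|X|^p) \leq \|X\|_{\scriptscriptstyle\BH}^p \vp_\alpha(\id) \leq \|X\|_{\scriptscriptstyle\BH}^p$, whence $\|X\|_{p,\mathcal{I}} \leq \|X\|_{\scriptscriptstyle\BH}$, which will eventually give (a.2). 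Finally, the non-commutative $L^p$ inequality $\|AX\|_{p,\vp_\alpha} \leq \|A\|_{\scriptscriptstyle\BH}\,\|X\|_{p,\vp_\alpha}$ (and the analogous right-multiplication estimate), which is a standard consequence of $A^*A \leq \|A\|_{\scriptscriptstyle\BH}^2\,\id$ and the operator monotonicity of $t \mapsto t^{p/2}$ together with the trace property, yields
$$\|AX\|_{p,\mathcal{I}} \leq \|A\|_{\scriptscriptstyle\BH}\,\|X\|_{p,\mathcal{I}}, \qquad \|XA\|_{p,\mathcal{I}} \leq \|A\|_{\scriptscriptstyle\BH}\,\|X\|_{p,\mathcal{I}}, \qquad A,X \in \mathfrak{M}.$$
In particular left and right multiplication by any $A \in \mathfrak{M}$ are continuous in $\|\cdot\|_{p,\mathcal{I}}$.

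Having this, I would form the Banach space completion $\mathfrak{M}(p,\mathcal{I})$ and extend by continuity the involution (isometric) and, for each fixed $A \in \mathfrak{M}$, the left and right multiplications $x \mapsto Ax$, $x \mapsto xA$. The estimates above guarantee that these extensions are well-defined, are bounded by $\|A\|_{\scriptscriptstyle\BH}$, and continue to satisfy the associativity and compatibility axioms of an $\mathfrak{M}$-bimodule (these identities pass from dense subset $\mathfrak{M}$ to $\mathfrak{M}(p,\mathcal{I})$ by joint sequential continuity, since one of the factors always lies in $\mathfrak{M}$). Thus $(\mathfrak{M}(p,\mathcal{I}),\mathfrak{M})$ is a Banach quasi *-algebra.

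It remains to verify (a.1)--(a.4) of Definition \ref{cqetc} with $\Ao = \mathfrak{M}[\|\cdot\|_{\scriptscriptstyle\BH}]$. Condition (a.1) is the C*-identity in $\mathfrak{M}$, (a.4) is the fact that a von Neumann algebra is complete in its operator norm, and (a.2)--(a.3) have been established above (for (a.3) the inequality, initially proved for $x \in \mathfrak{M}$, extends to all $x \in \mathfrak{M}(p,\mathcal{I})$ by density and the continuity of left multiplication already produced). The only step that is not completely routine is the non-commutative $L^p$ estimate $\|AX\|_{p,\vp_\alpha} \leq \|A\|_{\scriptscriptstyle\BH}\|X\|_{p,\vp_\alpha}$, which is the main technical obstacle; however, since each $\vp_\alpha$ is a normal finite trace on $\mathfrak{M}$ this is a classical fact from the Segal--Nelson theory recalled in the preliminaries, and no further work is required.
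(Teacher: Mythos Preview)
Your proof is correct and follows essentially the same route as the paper's: show the involution is $\|\cdot\|_{p,\mathcal{I}}$-isometric, use condition (P) to get $\|\cdot\|_{p,\mathcal{I}}\leq\|\cdot\|_{\scriptscriptstyle\BH}$, establish the module estimate $\|AX\|_{p,\mathcal{I}}\leq\|A\|_{\scriptscriptstyle\BH}\|X\|_{p,\mathcal{I}}$, and pass to the completion; the paper merely phrases the same estimates through the support projections $P_\alpha$ via $\|X\|_{p,\mathcal{I}}=\sup_\alpha\|XP_\alpha\|_{p,\alpha}$, which is cosmetic. One small caution: $t\mapsto t^{p/2}$ is operator monotone only for $p\leq 2$, so your parenthetical derivation of $\|AX\|_{p,\vp_\alpha}\leq\|A\|_{\scriptscriptstyle\BH}\|X\|_{p,\vp_\alpha}$ does not work as written for $p>2$; since you correctly fall back on this being a standard Segal--Nelson fact, the argument is unaffected.
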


\begin{proof}
Indeed, we have
\begin{equation}\label{aaa}\norm{X{^\ast}}_{p,\mathcal{I}}=\sup_{\alpha\in\mathcal{I}}\norm{X^{\ast}P_\alpha}_{p,\alpha}=
\sup_{\alpha\in\mathcal{I}}\norm{(XP_\alpha)^{\ast}}_{p,\alpha}=\norm{X}_{p,\mathcal{I}},
\quad \forall X\in\mathfrak{M}.\end{equation} Furthermore, for
every $X,Y\in\mathfrak{M}$,
\begin{equation}\label{bbb}\norm{XY}_{p,\mathcal{I}}=\sup_{\alpha\in\mathcal{I}}\norm{XYP_\alpha}_{p,\alpha}\leq
\norm{X}_{\scriptscriptstyle\BH}\sup_{\alpha\in\mathcal{I}}
\norm{YP_\alpha}_{p,\alpha}=\norm{X}_{\scriptscriptstyle\BH}\norm{Y}_{p,\mathcal{I}}.\end{equation}
Finally, condition (P) implies that
$$\norm{X}_{p,\mathcal{I}} \leq
\norm{X}_{\scriptscriptstyle\BH}, \quad \forall
X\in\mathfrak{M}.$$ From \eqref{aaa} and \eqref{bbb} it follows
that ${\mathfrak{M}}(p,{\mc I})$ is a Banach quasi *-algebra. It
is clear that $\norm{\ }_{\scriptscriptstyle\BH}$ satisfies the
conditions (a.1)-(a.3) of Section 1. Therefore
$({\mathfrak{M}}(p,{\mc I}), \mathfrak{M})$ is a CQ *-algebra.
\end{proof}
The next step consists in investigating the Banach space
${\mathfrak{M}}(p,{\mc I})[\norm{\cdot}_{p,\mathcal{I}}]$. In
particular we are interested in the question as to whether
${\mathfrak{M}}(p,{\mc I})[\norm{\cdot}_{p,\mathcal{I}}]$ can be
identified with a space of operators affiliated with $\M$. For
shortness, whenever no ambiguity can arise, we write $\M_p$
instead of ${\mathfrak{M}}(p,{\mc I})$

\medskip
Let $\FF =\{\vp_\alpha\}_{\alpha\in\mathcal{I}}$ be a sufficient
family of normal, finite traces on the von Neumann algebra
$\mathfrak{M}$ satisfying Condition (P). The traces $\vp_\alpha$
are not necessarily faithful. Put $\M_\alpha = \M P_\alpha$,
where, as before, $P_\alpha$ denotes the support of $\vp_\alpha$.
Each $\M_\alpha$ is a \vna\ and $\vp_\alpha$ is faithful in $\M
P_\alpha$ \cite[Proposition V. 2.10]{takesaki}.

More precisely,
$$\M_\alpha:=\M P_\alpha = \{ Z=XP_\alpha, \mbox{ for some } X \in
\M\}.$$

The positive cone $\M_\alpha^+$ of $\M_\alpha$ equals the set

$$\{ Z=XP_\alpha, \mbox{ for some } X \in
\M^+\}.$$ For $Z = XP_\alpha \in \M_\alpha^+$, we put:
$$ \sigma_\alpha (Z):= \vp_\alpha (XP_\alpha).$$
The definition of $\sigma_\alpha (Z)$ does not depend on the
particular choice of $X$. Each $\sigma_\alpha$ is a normal,
finite, faithful trace on $\M_\alpha$. It is then possible to
consider the spaces $L^p(\M_\alpha, \sigma_\alpha)$, $p\geq 1$, in
the usual way. The norm of $L^p(\M_\alpha, \sigma_\alpha)$ is
indicated as $\| \cdot \|_{p,\alpha}$.

\bigskip
Let now $(X_k)$ be a Cauchy sequence in $\M[\| \cdot \|_{p,{\mc
I}}]$. For each $\alpha \in {\mc I}$, we put $Z_k^{(\alpha)}= X_k
P_\alpha$.
 Then,
for each $\alpha \in {\mc I}$, $(Z_k^{(\alpha)})$ is a Cauchy
sequence in $\M_\alpha[\| \cdot \|_{p,\alpha}]$. Indeed, since
$|Z_k^{(\alpha)} - Z_h^{(\alpha)}|^p =|X_k -X_h|^pP_\alpha$,
\begin{eqnarray*}
\|Z_k^{(\alpha)} - Z_h^{(\alpha)}\|_{p,\alpha} &=&
\sigma_\alpha(|Z_k^{(\alpha)} - Z_h^{(\alpha)}|^p)^{1/p} \\
&=& \vp_\alpha(|X_k -X_h|^pP_\alpha)^{1/p} \\ &=& \vp_\alpha(|X_k
-X_h|^p)^{1/p} \to 0.
\end{eqnarray*}
Therefore, for each $\alpha \in {\mc I}$, there exists an operator
$Z^{(\alpha)}\in L^p(\M_\alpha, \sigma_\alpha)$ such that:
$$ Z^{(\alpha)}=\| \cdot \|_{p,\alpha}-\lim_{k \to \infty}
Z_k^{(\alpha)}.$$

It is now natural to ask the question as to whether there exists
an operator $X$ closed, densely defined, affiliated with $\M$
which reduces to $Z^{(\alpha)}$ on $\M_\alpha$. To begin with, we
assume that the projections $\{P_\alpha\}$ are mutually
orthogonal. In this case, putting $\HH_\alpha = P_\alpha \HH$, we
have
$$ \HH= \bigoplus_{\alpha \in {\mc I}} \HH_\alpha =\{ (f_\alpha): \, f_\alpha \in \HH_\alpha, \; \sum_{\alpha
\in I}  \|f_\alpha\|^2 <\infty\}.$$ We put
$$ D(X)= \{ (f_\alpha) \in \HH: \, f_\alpha \in D(Z^{(\alpha)});\; \sum_{\alpha
\in I}  \|Z^{(\alpha)}f_\alpha\|^2 <\infty\}$$ and for
$f=(f_\alpha) \in D(X)$ we define
$$Xf= (Z^{(\alpha)}f_\alpha) .$$
Then
\begin{itemize}
  \item[(i)] $D(X)$ is dense in $\HH$.\\
  Indeed, $D(X)$ contains all $f=(f_\alpha)$ with $f_\alpha=0$
  except that for a finite subset of indeces.
  \item[(ii)]$X$ is closed in $\HH$.\\
  Indeed, let $f_n= (f_{n,\alpha})$ be a sequence of elements of
  $D(X)$ with $f_n \to g=(g_\alpha)\in \HH$ and $Xf_n \to h$.
  Since
  $$ f_n \to g \Leftrightarrow f_{n, \alpha} \to g_\alpha \in \HH_\alpha,
  \; \forall \alpha \in {\mc I}$$ and
$$ Xf_n \to h \Leftrightarrow (Xf_{n})_\alpha \to h_\alpha \in \HH_\alpha,
  \; \forall \alpha \in {\mc I},$$
  by $(Xf_{n})_\alpha=Z^{(\alpha)}f_{n, \alpha}$ and from the
  closedness of each $Z^{(\alpha)}$ in $\HH_\alpha$, we get
  $$ g_\alpha \in D(Z^{(\alpha)}) \hspace{4mm} \mbox{and}
  \hspace{4mm} h_\alpha = Z^{(\alpha)}g_\alpha. $$
  It remains to check that $\sum_{\alpha
\in {\mc I}}  \|Z^{(\alpha)}g_\alpha\|^2 <\infty$ but this is clear,
since both $(Z^{(\alpha)}g_\alpha)$ and $h=(h_\alpha) \in \HH$.
  \item[(iii)] $X \aff \M$.\\
  Let $Y\in \M'$. Then, $\forall f \in \HH$, $Yf=(YP_\alpha f)$
  and $YP_\alpha \in (\M P_\alpha)'= \M'P_\alpha$. Therefore
 $$ XYf =((XY)P_\alpha f) =(YXP_\alpha f) =YXf .$$
\end{itemize}
In conclusion, $X$ is a measurable operator.

\smallskip
Thus, we have proved the following
\begin{prop}Let $\FF =\{\vp_\alpha\}_{\alpha\in\mathcal{I}}$ be a sufficient
family of normal, finite traces on the von Neumann algebra
$\mathfrak{M}$. Assume that Condition (P) is fulfilled and that
the $\vp_\alpha$'s have mutually orthogonal supports. Then
${\mathfrak{M}}_p$, $p\geq 1$, consists of measurable operators.
\end{prop}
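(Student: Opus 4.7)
The plan is to assemble, for each element of $\M_p$, a measurable operator on $\HH$ out of the fibrewise limits $Z^{(\alpha)}$ constructed in the paragraphs preceding the statement, and then to upgrade ``affiliated'' to ``measurable'' via the finiteness of $\M$.

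First, I would fix an arbitrary element of $\M_p$ and pick a representing Cauchy sequence $(X_k)\subset \M$ in the norm $\|\cdot\|_{p,\mathcal{I}}$. By Lemma~\ref{lemma2} and the assumed mutual orthogonality of the supports $\{P_\alpha\}$, the Hilbert space decomposes as $\HH = \bigoplus_{\alpha\in \mathcal{I}} P_\alpha \HH$. The discussion immediately before the proposition then supplies, for each $\alpha$, an operator $Z^{(\alpha)}\in L^p(\M_\alpha,\sigma_\alpha)$ as the $\|\cdot\|_{p,\alpha}$-limit of $X_k P_\alpha$, and glues them into an operator $X$ on the natural direct-sum domain. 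Items (i)--(iii) verified there already show that $X$ is densely defined, closed and affiliated with $\M$.

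To promote ``affiliated'' to ``measurable'', I would invoke the remark at the end of Section~1: since $\FF$ is a sufficient family of finite normal traces on $\M$, the algebra $\M$ is a finite von Neumann algebra, so by Segal's corollary \cite[Cor.~4.1]{segal} every operator affiliated with $\M$ is automatically measurable. Hence $X$ is measurable, which is exactly what is claimed.

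The main obstacle, beyond the already completed construction, is the well-definedness of the association $\M_p \to \{\text{measurable operators}\}$. If two Cauchy sequences $(X_k)$ and $(Y_k)$ represent the same class of $\M_p$, then for each $\alpha$ one has $\|(X_k-Y_k)P_\alpha\|_{p,\alpha}\to 0$, so the corresponding $Z^{(\alpha)}$'s agree and the direct-sum construction produces the same operator; this step is routine. If one also wants to regard $\M_p$ as genuinely embedded in the measurable operators, one needs injectivity: a class whose assembled operator vanishes is itself zero. This would use the sufficiency of $\FF$, the fact that $\bigvee_\alpha P_\alpha = \id$ (Lemmas~\ref{thelemma} and \ref{lemma2}), and the identity $\|X_k\|_{p,\mathcal{I}} = \sup_\alpha \|X_k P_\alpha\|_{p,\alpha}$ recorded before Proposition~\ref{nota}.
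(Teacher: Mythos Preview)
Your proposal is correct and follows essentially the same route as the paper: the construction of $X$ from the fibrewise limits $Z^{(\alpha)}$, together with the verification that $X$ is densely defined, closed, and affiliated with $\M$, is exactly the argument the paper carries out in the paragraphs preceding the proposition, and the passage from ``affiliated'' to ``measurable'' via the finiteness of $\M$ and Segal's corollary is the implicit step behind the paper's concluding sentence. Your extra remarks on well-definedness and injectivity of the map $\M_p\to\{\text{measurable operators}\}$ go slightly beyond what the paper makes explicit, but they are straightforward and in the same spirit.
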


\medskip
The analysis of the general case would really be simplified if,
from a given sufficient family $\FF$ of normal finite traces, one
could extract (or construct) a {\it sufficient} subfamily ${\mc
G}$ of traces with mutually orthogonal supports. Apart from quite
simple situations (for instance when $\FF$ is finite or
countable), we do not know if this is possible or not. There is
however a relevant case where this can be fairly easily done. This
occurs when $\FF$ is a convex and $w^*$-compact family of traces
on $\M$.
\begin{lemma} \label{LEMMA3}
Let $\FF$ be a convex $w^{\ast}$-compact family of normal, finite
traces on a von Neumann algebra $\mathfrak{M}$; assume that, for
each central operator $Z$, with $0\leq Z\leq {\mb I}$, and each
$\eta\in\FF$ the functional $\eta_{\scriptscriptstyle
Z}(X):=\eta(XZ)$ belongs to $\FF$. Let $\mathfrak{E}\FF$ be the
set of extreme elements of $\FF$. If $\eta_1,
\eta_2\in\mathfrak{E}\FF$, $\eta_1 \neq n_2$, and $P_1$ and $P_2$
are their respective supports, then $P_1$ and $P_2$ are
orthogonal.
\end{lemma}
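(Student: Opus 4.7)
The plan is to suppose that $Q := P_1 P_2 \neq 0$ and derive a contradiction with the extremality of $\eta_1$. Since $\eta_1,\eta_2$ are traces on a \vna, their supports $P_1, P_2$ are central projections, so $Q$ is central with $0 \leq Q \leq \id$. Applying the closure hypothesis with $Z = Q$ and $Z = \id - Q$ gives $\eta_1^{Q},\eta_1^{\,\id-Q}\in\FF$, and these satisfy the additive splitting $\eta_1 = \eta_1^{Q} + \eta_1^{\,\id-Q}$. The defining property of the support forces $\eta_1(Q) > 0$ (since $0 \neq Q \leq P_1$), so $\eta_1^{Q}$ is a nonzero element of $\FF$ with central support exactly $Q \leq P_2$.

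The heart of the argument is to use extremality of $\eta_1$ to conclude that $\eta_1^{\,\id-Q} = 0$. Assuming instead $\eta_1^{\,\id-Q}\neq 0$ and setting $\lambda := \eta_1(Q)/\eta_1(\id)\in (0,1)$, one may write
\[
\eta_1 \;=\; \lambda\,\Bigl(\tfrac{1}{\lambda}\eta_1^{Q}\Bigr) \;+\; (1-\lambda)\,\Bigl(\tfrac{1}{1-\lambda}\eta_1^{\,\id-Q}\Bigr),
\]
a convex combination of two distinct normal finite traces. If both normalized pieces belonged to $\FF$, the extremality of $\eta_1$ would be contradicted. The main obstacle is precisely this membership: the hypothesis $\eta^Z\in\FF$ for central $Z\in[0,\id]$ only permits \emph{shrinking} a trace (e.g.\ $t\eta\in\FF$ for $t\in[0,1]$ via $Z=t\id$), not amplifying it. I would circumvent this by working with symmetric perturbations of $\eta_1$ within $\FF$: combining convexity, $w^*$-compactness, and the closure hypothesis applied to the central spectral projections of $\eta_1^{Q}-\lambda\eta_1$, one produces a small $\epsilon > 0$ for which both $\eta_1 \pm \epsilon(\eta_1^{Q}-\lambda\eta_1)$ lie in $\FF$, exhibiting $\eta_1$ as the midpoint of a nondegenerate segment and contradicting its extremality. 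This perturbation step is the crux of the proof.

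The contradiction forces $\eta_1^{\,\id-Q}=0$, so $\eta_1$ is supported on $Q \leq P_2$ and therefore $P_1 \leq P_2$. By symmetry (swapping the roles of $\eta_1$ and $\eta_2$) $P_2 \leq P_1$, giving $P_1 = P_2 =: P$. In this residual case, both $\eta_1$ and $\eta_2$ are normal finite traces on the finite \vna\ $\M P$ with common full support $P$. I would finish by a non-commutative Radon--Nikodym argument producing a central positive $H \in \M P$ with $\eta_2 = \eta_1(\,\cdot\,H)$; after rescaling so that $H \leq P$, the closure hypothesis gives $\eta_2 = \eta_1^{H}\in\FF$, and reapplying the extremality of $\eta_1$ and $\eta_2$ to the spectral decomposition of $H$ forces $H$ to be a scalar multiple of $P$, so $\eta_1$ and $\eta_2$ are proportional; a final normalization argument using extremality yields $\eta_1=\eta_2$, contradicting the standing hypothesis. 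Hence $P_1 P_2 = 0$.
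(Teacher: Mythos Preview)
Your overall architecture mirrors the paper's: assume $P_1P_2\neq 0$, cut $\eta_1$ by this projection, use extremality to force $P_1\leq P_2$, symmetrize to get $P_1=P_2$, and finish with a Radon--Nikodym argument on the common support. The last two stages match the paper almost step for step (the paper uses \cite[Prop.~V.2.31]{takesaki} to produce the central $Z$ and then plays the two resulting scalar relations against each other).

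The genuine gap is precisely where you flag ``the crux'' and then do not prove it. You claim that ``combining convexity, $w^*$-compactness, and the closure hypothesis applied to the central spectral projections of $\eta_1^{Q}-\lambda\eta_1$, one produces a small $\epsilon>0$ for which both $\eta_1\pm\epsilon(\eta_1^{Q}-\lambda\eta_1)$ lie in $\FF$.'' Two problems. First, $\eta_1^{Q}-\lambda\eta_1$ is a linear functional on $\M$, not an operator; it has no spectral projections, so the sentence does not parse. Second, the desired conclusion does not follow from the listed ingredients: writing
\[
\eta_1+\epsilon\bigl(\eta_1^{Q}-\lambda\eta_1\bigr)=\bigl[1+\epsilon(1-\lambda)\bigr]\eta_1^{Q}+\bigl[1-\epsilon\lambda\bigr]\eta_1^{\,\id-Q},
\]
the first coefficient exceeds $1$ for every $\epsilon>0$, so this is never of the form $\eta_1^{Z}$ with $0\leq Z\leq \id$. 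Convexity and $w^*$-compactness only let you move \emph{inward} from $\eta_1$; they give no mechanism for pushing past it on both sides. You have correctly located the difficulty and then not resolved it.

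The paper bypasses your normalization detour entirely. It notes that $\eta_{1,2}:=\eta_1(\,\cdot\,P_2)\in\FF$ satisfies $\eta_{1,2}\leq\eta_1$ and invokes extremality of $\eta_1$ directly to conclude $\eta_{1,2}=\lambda\eta_1$ for some $\lambda\in(0,1]$; equality of supports then gives $P_1P_2=P_1$ immediately. There is no attempt to renormalize the two pieces or to exhibit $\eta_1$ as a midpoint of a nondegenerate segment. If you wish to align with the paper, replace your perturbation paragraph by this majorization step.
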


\begin{proof}
Let $P_1, P_2$ be, respectively, the supports of $\eta_1$ and $\eta_2$. We begin with proving that either $P_1=P_2$ or $P_1 P_2=0$. Indeed, assume that
 $P_1P_2\neq 0$. We define
$$\eta_{1,2}(X)=\eta_1(XP_2) \quad \quad X \in \M.$$
Were $\eta_{1,2}=0$, then, in particular $\eta_{1,2}(P_2)=0$, i.e.
$\eta_{1}(P_2)=0$ and therefore, by definition of support,
$P_2\leq 1-P_1$. This implies that $P_1P_2=0$, which contradicts
the assumption. We now show that the support of $\eta_{1,2}$ is
$P_1P_2.$ Let, in fact, $Q$ be a projection such that
$\eta_{1,2}(Q)=0$. Then
$$\eta_{1}(QP_2)=0 \,\Rightarrow QP_2 \leq 1-P_1\,\Rightarrow QP_2 (1-P_1) = QP_2 \Rightarrow QP_2P_1=0.$$
Then the largest $Q$ for which this happens is $ 1-P_2P_1.$ We
conclude that the support of the trace $\eta_{1,2}$ is $ P_1P_2.$
Finally, by definition, one has $\eta_{1,2}(X)=\eta_1(XP_2),$ and,
since $XP_2 \leq X$,
$$\eta_{1,2}(X)=\eta_1(XP_2)\leq \eta_1(X) \quad \forall X\in\mathfrak{M}.$$
Thus $\eta_{1}$ majorizes  $\eta_{1,2}.$ But $\eta_{1}$ is extreme
in $\FF.$ Therefore $\eta_{1,2}$ has the form $\lambda\eta_{1}$
with $\lambda\in ]0,1]$. This implies that $\eta_{1,2}$ has the
same support as $\eta_{1}$; therefore $ P_1P_2=P_1$ i.e. $ P_1
\leq P_2.$ Starting from $\eta_{2,1}(X)=\eta_2(XP_1)$, we get, in
similar way, $ P_2 \leq P_1.$ Therefore, $P_1P_2 \neq 0$ implies
$P_1=P_2$. However,  two different traces of $\mathfrak{E}\FF$
cannot have the same support. Indeed, assume that there exist
$\eta_1, \eta_2\in\FF$ having the same support $P.$ Since $P$ is
central, we can consider the von Neumann algebra $\mathfrak{M}P$.
The restrictions of $\eta_1, \eta_2$ to $\mathfrak{M}P$ are normal
faithful semifinite traces. By \cite[Prop. V.2.31]{takesaki} there
exist a central element $Z$ in $\mathfrak{M}P$ with $0\leq  Z \leq
P$ ($P$ is here considered as the unit of $\M P$) such that
\begin{equation} \label{uno}\eta_1(X)=(\eta_1+\eta_2)(ZX) \quad \forall
X\in\mathfrak({\M}P)_+.\end{equation} Then $Z$ also belongs to the
center of $\mathfrak{M}$, since for every $V\in\mathfrak{M}$
$$ZV=Z(VP+VP^\perp)=ZVP=VZP=VZ.$$
Therefore the functionals
$$\eta_{\scriptscriptstyle {1,Z}}(X):=\eta_1(XZ) \quad \quad \eta_{\scriptscriptstyle {2,Z}}(X):=\eta_2(XZ) \quad\quad  X\in\mathfrak{M}$$
belong to the family $\FF$ and are majorized, respectively, by the
extreme elements $\eta_1,\eta_2.$ Then, there exist $\lambda,\mu
\in[0,1]$ such that
$$\eta_1(XZ)=\lambda\eta_1(X)\quad \quad \eta_2(XZ)=\mu\eta_1(X), \quad \forall X \in \M.$$
If $\lambda=1$ we would have, from \eqref{uno}, $\eta_2(ZX)=0$,
for every $X\in(\mathfrak{M}P)_+$; in particular, $\eta_2(\mid Z
\mid^2)=0$; this implies that $ Z=0$. Thus $\lambda \neq 1.$
Analogously, $\mu \neq 0$; indeed, if $\mu=0$, then
$\eta_1(X)=\lambda\eta_1(X)$ and thus $\lambda=1.$ Therefore there
exist $\lambda,\mu\in(0,1)$ such that
$$\eta_1(X)=\lambda\eta_1(X)+\mu\eta_2(X) \quad \forall
X\in\mathfrak{M}P ,$$ which, in turn, implies
$$\eta_1(X)=\lambda\eta_1(X)+\mu\eta_2(X) \quad \forall
X\in\mathfrak{M}$$ Hence,
$$(1-\lambda)\eta_1(X)=\mu\eta_2(X) \quad \forall
X\in\mathfrak{M}.$$ From the last equality, dividing by
$\max\{1-\lambda,\mu\}$ one gets that one of the two elements is a
convex combination of the other and of $0$; which is absurd. In
conclusion, different supports of extreme traces of ${\mathfrak
F}$ are orthogonal.

\end{proof}

Since, for every $X \in \M$, $\|X\|_{p, {\mc I}}$ remains the same
if computed either with respect to ${\mathfrak F}$ or to
${\mathfrak{EF}}$, we can deduce the following

\begin{thm}\label{casoA}
Let $\FF$ be a convex and $w^*$-compact sufficient family of
normal, finite traces on the von Neumann algebra $\mathfrak{M}$.
Assume that $\FF$ satisfies Condition (P) and that for each
central operator $Z$, with $0\leq Z\leq {\mb I}$, and each
$\eta\in\FF$ the functional $\eta_{\scriptscriptstyle
Z}(X):=\eta(XZ)$ belongs to $\FF$. Then the completion
${\mathfrak{M}}_p[\norm{\cdot}_{p,\mathcal{I}}], $ consists of
measurable operators.
\end{thm}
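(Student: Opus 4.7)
The plan is to reduce the hypothesis to the setting of the previous Proposition, where the traces involved have mutually orthogonal supports. The extreme points $\mathfrak{EF}$ of $\FF$ are the natural candidate family, by virtue of Lemma \ref{LEMMA3}.

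First I would apply the Krein--Milman theorem: since $\FF$ is convex and $w^{*}$-compact, it is the $w^{*}$-closed convex hull of $\mathfrak{EF}$. For any fixed $X\in\mathfrak{M}$, the map $\eta\mapsto \eta(|X|^p)$ is affine and $w^{*}$-continuous on $\FF$ (recall that $|X|^p \in \mathfrak{M}$, so this is just evaluation at a fixed element). The supremum of such a map over a compact convex set equals its supremum over the extreme points, so
$$\|X\|_{p,\mathcal{I}}=\sup_{\eta\in\FF}\eta(|X|^p)^{1/p}=\sup_{\eta\in\mathfrak{EF}}\eta(|X|^p)^{1/p}.$$
Hence the norm, and therefore the Banach space completion $\mathfrak{M}_p$, is unchanged if we compute it with respect to $\mathfrak{EF}$ instead of $\FF$.

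Next I would verify that $\mathfrak{EF}$ inherits the hypotheses of the preceding Proposition. Condition (P) is trivial, since $\mathfrak{EF}\subseteq\FF$. Sufficiency: if $X\in\mathfrak{M}$ is positive and $\eta(X)=0$ for every $\eta\in\mathfrak{EF}$, then for an arbitrary $\vp\in\FF$ we can write $\vp$ as a $w^{*}$-limit of convex combinations of elements of $\mathfrak{EF}$, and evaluating at the fixed positive element $X$ gives $\vp(X)=0$; sufficiency of the original family $\FF$ then forces $X=0$. Finally, by Lemma \ref{LEMMA3}, distinct elements of $\mathfrak{EF}$ have mutually orthogonal supports.

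Thus $\mathfrak{EF}$ is a sufficient family of normal finite traces, with mutually orthogonal supports, satisfying Condition (P), and yielding the same completion $\mathfrak{M}_p$. Applying the previous Proposition to $\mathfrak{EF}$ shows that every element of $\mathfrak{M}_p$ is a measurable operator affiliated with $\mathfrak{M}$. The only delicate point is the passage from $\FF$ to $\mathfrak{EF}$ without shrinking the norm; the affine $w^{*}$-continuous dependence of $\eta\mapsto\eta(|X|^p)$ together with Krein--Milman handles this cleanly, so I do not expect any serious obstacle beyond bookkeeping.
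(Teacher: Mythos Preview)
Your proof is correct and follows exactly the paper's approach: pass from $\FF$ to its extreme points $\mathfrak{EF}$, invoke Lemma~\ref{LEMMA3} for orthogonality of the supports, and apply the preceding Proposition. The paper asserts in a single line that $\|X\|_{p,\mathcal{I}}$ is unchanged when computed over $\mathfrak{EF}$; you simply supply the Krein--Milman/affinity justification and the sufficiency check that the paper leaves implicit.
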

Families of traces satisfying the assumptions of Theorem
\ref{casoA} will be constructed in the next section.
\section{A representation theorem}
Once we have constructed in the previous section some CQ*-algebras
of operators affiliated to a given von Neumann algebra, it is
natural to pose the question under which conditions can an
abstract CQ*-algebra $(\X,\Ao)$ be realized as a CQ*-algebra of
this type.

 Let $(\X[\norm{\cdot}], \Ao
[\norm{\cdot}_{\scriptscriptstyle 0}])$ be a CQ*-algebra with unit
$e$ and let
$$ \mathcal{T}(\X)= \{\Omega \in {\mc S}(\X):\, \Omega(x,x)=\Omega(x^*,x^*), \;\forall x \in \X\}.$$
We remark that if $\Omega \in {\mc T}(\X)$ then, by polarization,
$\Omega(y^ {\ast},x^ {\ast})=\Omega(x,y),\quad \forall x,y \in
\X.$\\ It is easy to prove that the set $ \mathcal{T}(\X)$ is
convex.

\medskip
For each $\Omega\in\mathcal{T}(\X)$, we define a linear
functional $\omega_\Omega$ on $\Ao$ by
$$\omega_\Omega(a):=\Omega(a,e)\quad \quad a\in\Ao.$$
We have
$$\omega_\Omega(a^{\ast}a)=\Omega(a^{\ast}a,e)=\Omega(a,a)=\Omega(a^{\ast},a^{\ast})=\omega_\Omega(aa^{\ast})\geq0.$$
This shows at once that $\omega_\Omega$ is positive and tracial.\\
We put
$$\mathfrak{M}_\mathcal{T}(\Ao)=\{\omega_\Omega;\, \Omega\in\mathcal{T}(\X)\}.$$
>From the  convexity of $\mathcal{T}(\X)$ it follows easily that
$\mathfrak{M}_\mathcal{T}(\Ao)$ is convex too. If we denote with
$\norm{f}^\sharp$ the norm of the bounded functional $f$ on $\Ao$,
we also get
$$\norm{\omega_{\Omega}}^\sharp=\omega_{\Omega}(e)=\Omega(e,e)\leq\norm{e}^2.$$
Therefore
$$\mathfrak{M}_\mathcal{T}(\Ao)\subseteq\{\omega\in\Ao^\sharp:
\,\norm{\omega}^\sharp\leq\norm{e}^2\},$$ where $\Ao^\sharp$
denotes the topological dual of $\Ao
[\norm{\cdot}_{\scriptscriptstyle 0}].$\\
Setting
$$f_\Omega(a):=\tfrac{\omega_\Omega(a)}{\norm{{e}}^2}$$
we get $$ f_\Omega\in\{\omega\in\Ao^\sharp: \norm{\omega}^\sharp
\leq 1 \}.$$ By the Banach - Alaglou theorem, the set
$\{\omega\in\Ao^{\sharp}: \norm{\omega}^{\sharp} \leq 1 \} $ is a
$w^{\ast}-$ compact subset of $\Ao^{\sharp}$ . Then, the set
$\{\omega\in\Ao^\sharp: \,\norm{\omega}^{\sharp}\leq\norm{e}^2\}$
is also $w^{\ast}-$compact.

\begin{prop} \label{compatto}
$\mathfrak{M}_\mathcal{T}(\Ao)$ is $w^{\ast}$-closed and,
therefore, $w^{\ast}$-compact.
\end{prop}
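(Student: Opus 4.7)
The plan is to show $w^*$-closedness of $\mathfrak{M}_\mathcal{T}(\Ao)$; the $w^*$-compactness then follows at once, since we have already noted that $\mathfrak{M}_\mathcal{T}(\Ao)$ is contained in the $w^*$-compact ball $\{\omega\in\Ao^\sharp:\|\omega\|^\sharp\leq\|e\|^2\}$. So I take a net $(\omega_{\Omega_\lambda})\subset\mathfrak{M}_\mathcal{T}(\Ao)$ with $\omega_{\Omega_\lambda}\to\omega$ in the $w^*$-topology of $\Ao^\sharp$, and aim to construct an $\Omega\in\mathcal{T}(\X)$ with $\omega=\omega_\Omega$.

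The key step is to lift the convergence from $\Ao$ to all of $\X\times\X$ by a compactness argument. Since each $\Omega_\lambda\in\mathcal{S}(\X)$, one has $|\Omega_\lambda(x,y)|\leq\|x\|\,\|y\|$ for every $x,y\in\X$. Hence the family $(\Omega_\lambda)$ can be viewed as a net in the Tychonoff product
\[
K=\prod_{(x,y)\in\X\times\X}\{z\in\mathbb{C}:|z|\leq\|x\|\,\|y\|\},
\]
which is compact. I would therefore pass to a subnet $(\Omega_\mu)$ converging pointwise on $\X\times\X$ to a function $\Omega:\X\times\X\to\mathbb{C}$.

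Next I verify that $\Omega\in\mathcal{T}(\X)$: sesquilinearity, the bound $|\Omega(x,y)|\leq\|x\|\,\|y\|$, positivity $\Omega(x,x)\geq 0$, the module-invariance property (ii) of Definition \ref{SLP}, and the traciality $\Omega(x,x)=\Omega(x^*,x^*)$ all pass to pointwise limits of nets. This is routine, since each condition is a closed relation on values of $\Omega$ at finitely many points.

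Finally, for every $a\in\Ao$,
\[
\omega_\Omega(a)=\Omega(a,e)=\lim_\mu\Omega_\mu(a,e)=\lim_\mu\omega_{\Omega_\mu}(a)=\omega(a),
\]
because a subnet of a $w^*$-convergent net has the same limit. Hence $\omega=\omega_\Omega\in\mathfrak{M}_\mathcal{T}(\Ao)$, proving $w^*$-closedness. The only mildly delicate step is the Tychonoff compactness argument used to extract a pointwise convergent subnet on the (generally uncountable) set $\X\times\X$; once this is in place, the rest is pure verification. An equivalent alternative would be to work with the representing operators $T_\lambda:\X\to\X^\sharp$ of Proposition \ref{repr}, observing that the closed unit ball of $\mathcal{B}(\X,\X^\sharp)$ is compact in the appropriate weak topology, but the direct Tychonoff argument on sesquilinear forms is cleanest.
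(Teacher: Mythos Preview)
Your argument is correct, but it differs from the paper's in a noteworthy way. The paper avoids any subnet extraction: it observes, using the invariance property (ii) of Definition~\ref{SLP}, that for $a,b\in\Ao$ one has $\Omega_\lambda(a,b)=\Omega_\lambda(b^*a,e)=\omega_{\Omega_\lambda}(b^*a)$, so the \emph{full} net $\Omega_\lambda(a,b)$ already converges to $\omega(b^*a)$. One can therefore \emph{define} $\Omega_{\scriptscriptstyle 0}(a,b):=\omega(b^*a)$ on $\Ao\times\Ao$, check positivity and the bound $|\Omega_{\scriptscriptstyle 0}(a,b)|\leq\|a\|\,\|b\|$, and then extend by continuity to $\X\times\X$ using the density of $\Ao$. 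This route is more elementary---no Tychonoff, no subnets---and it displays an explicit formula for $\Omega$ in terms of $\omega$, which makes the dependence transparent. Your Tychonoff argument, by contrast, is a clean general-purpose compactness trick that works without noticing the algebraic identity $\Omega_\lambda(a,b)=\omega_{\Omega_\lambda}(b^*a)$; it buys you the limit on all of $\X\times\X$ at once, at the cost of invoking a heavier theorem and losing the explicit formula.
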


\begin{proof}
Let $(\omega_{\Omega_\alpha})$ be a  net in
$\mathfrak{M}_\mathcal{T}(\Ao)$ $w^{\ast}-$ converging to a
functional $\omega\in\Ao^{\sharp}.$ We will show that
$\omega=\omega_{\Omega}$ for some
$\Omega\in\mathcal{T}(\X).$ Let us begin with defining
$\Omega_{\scriptscriptstyle 0}(a,b)=\omega(b^{\ast}a),\,a,b\in\Ao.$ By the definition
itself, $(\omega_{\Omega_\alpha})(a)\longrightarrow
\omega(a)=\Omega_{\scriptscriptstyle 0}(a,e)$. Moreover, for every
$a,b\in\Ao$,
$$ \Omega_{\scriptscriptstyle 0}(a,b)=\omega(b^{\ast}a)=\lim_{\alpha}\omega_{\Omega_\alpha}(b^{\ast}a)=\lim_{\alpha}\Omega_\alpha(a,b).$$
Therefore $$ \Omega_{\scriptscriptstyle
0}(a,a)=\lim_{\alpha}\Omega_\alpha(a,a)\geq 0.$$ We also have
$$\mid\Omega_{\scriptscriptstyle 0}(a,b)\mid=\lim_{\alpha}\mid\Omega_\alpha(a,b)\mid
\leq \norm{a}\, \norm{b}.$$ Hence $\Omega_{\scriptscriptstyle 0}$
can be extended by continuity to $\X\times\X.$ \ Indeed, let
$$x=\| \cdot \|-\lim_{n}a_n \quad y=\| \cdot \|-\lim_{n}b_n \quad (a_n),
(b_n)\subseteq\Ao$$ then
$$\mid\Omega_{\scriptscriptstyle 0}(a_n,b_n)-\Omega_{\scriptscriptstyle 0}(a_m,b_m)\mid=\mid
\Omega_{\scriptscriptstyle 0}(a_n,b_n)-\Omega_{\scriptscriptstyle 0}(a_m,b_n)+\Omega_{\scriptscriptstyle 0}(a_m,b_n)-\Omega_{\scriptscriptstyle 0}(a_m,b_m)
\mid\leq$$
$$\leq \mid\Omega_{\scriptscriptstyle 0}(a_n-a_m,b_n)\mid+
\mid\Omega_{\scriptscriptstyle 0}(a_m,b_n-b_m)\mid \leq
\norm{a_n-a_m} \norm{b_n}\,+\, \norm{a_m}\norm{b_n-b_m}\rightarrow
0 , $$ since $(\norm{a_n})$ ed $(\norm{b_n})$ are bounded
sequences. Therefore we can define
$$\Omega(x,y)=\lim_{n}\Omega_{\scriptscriptstyle 0}(a_n,b_n).$$
Clearly, $\Omega(x,x)\geq 0\quad \forall x\in\X.$ \\ It is easily
checked that $\Omega\in\mathcal{T}(\X)$. This concludes the proof.
\end{proof}

Since $\mathfrak{M}_\mathcal{T}(\Ao)$ is convex and
$w^{\ast}$-compact, by the Krein-Milmann theorem it follows that
it has extreme points and it coincides with the $w^{\ast}$-closure
of the convex hull of the set $\mathfrak{EM}_\mathcal{T}(\Ao)$ of
its extreme points.

By the Gelfand - Naimark theorem each $C^{\ast}$-algebra is
isometrically *-isomorphic to a $C^{\ast}$-algebra of bounded
operators in Hilbert space. This isometric *-isomorphism is called
the {\it universal *-representation}.

Thus, let $\pi$ be the universal *-representation of $\Ao$ and
$\pi(\Ao)^{''}$ the von Neumann algebra generated by
$\pi(\Ao).$

For every $\Omega\in\mathcal{T}(\X)$ and $a\in\Ao$, we put
$$\vp_\Omega(\pi(a))=\omega_\Omega(a).$$

Then, for each $\Omega\in \mathcal{T}(\X)$, $\vp_\Omega$
is a positive bounded linear functional on the operator algebra
$\pi ( \Ao).$

Clearly,
$$\vp_\Omega(\pi(a))=\omega_\Omega(a)=\Omega(a,e)$$

$$\mid\vp_\Omega(\pi(a))\mid=\mid\omega_\Omega(a)\mid=
\mid\Omega(a,e)\mid\leq \| a \| \| e\| \leq  \| a \|_{\scriptscriptstyle 0} \|e\|^2=\|
\pi(a) \| \, \| e\|^2 .$$

Thus $\vp_\Omega$ is continuous on $\pi(\Ao).$

By \cite[Theorem 10.1.2]{kadison}, $\vp_\Omega$ is weakly
continuous and so it extends uniquely to $\pi(\Ao)^{''}$.
Moreover, since $\vp_\Omega$ is a trace on $\pi(\Ao)$, the
extension  $\widetilde{\vp_\Omega}$ is a trace on
$\mathfrak{M}:=\pi(\Ao)^{''}$ too.

The norm $\norm{\widetilde{\vp_\Omega}}^\sharp$ of
$\widetilde{\vp_\Omega}$ as a linear functional on $\mathfrak{M}$
equals the norm of $\vp_\Omega$ as a functional on
$\pi(\Ao).$

We have:
$$\norm{\widetilde{\vp_\Omega}}^\sharp={\widetilde\vp_\Omega}(\pi(e)) =\vp_\Omega(\pi(e))=\omega_\Omega(e)\leq
\|e\|^2.
$$

The set
$$\mathfrak{N}_\mathcal{T}(\Ao)=\{\widetilde{\vp_\Omega};\,
\Omega\in\mathcal{T}(\X) \}$$ is convex and
$w^{\ast}$-compact in $\mathfrak{M}^\sharp$, as can be easily
seen by considering the map
$$\omega_\Omega\in\mathfrak{M}_\mathcal{T}(\Ao)\rightarrow
\widetilde{\vp_\Omega}\in\mathfrak{N}_\mathcal{T}(\Ao)$$ which is
linear and injective and taking into account the fact that, if
$a_\alpha\rightarrow a$ in $\Ao
[\norm{\cdot}]$, then $\widetilde{\vp_\Omega}(\pi(a_\alpha)-\pi(a))=
\omega_\Omega(a_\alpha-a)\rightarrow 0.$

Let $\mathfrak{EN}_\mathcal{T}(\Ao)$ be the set of extreme points
of $\mathfrak{N}_\mathcal{T}(\Ao)$; then
$\mathfrak{N}_\mathcal{T}(\Ao)$ coincides with $w^{\ast}$-closure
of the convex hull of $\mathfrak{EN}_\mathcal{T}(\Ao).$ The
extreme elements of $\mathfrak{N}_\mathcal{T}(\Ao)$ are easily
characterized by the following

\begin{prop}
$\widetilde{\vp_\Omega}$ is extreme in
$\mathfrak{N}_\mathcal{T}(\Ao)$ if, and only if, $\omega_\Omega$
is extreme in $\mathfrak{M}_\mathcal{T}(\Ao).$
\end{prop}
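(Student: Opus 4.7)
The plan is to verify that the correspondence
$$\Phi:\mathfrak{M}_\mathcal{T}(\Ao)\to\mathfrak{N}_\mathcal{T}(\Ao),\qquad \omega_\Omega\mapsto\widetilde{\vp_\Omega},$$
which was already observed in the excerpt to be linear and injective, is an \emph{affine bijection}. Once this is established, the equivalence of the two extremality conditions is immediate from the general fact that an affine bijection between convex sets sends extreme points to extreme points.

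First I would check that $\Phi$ is well defined: whenever $\omega_{\Omega_1}=\omega_{\Omega_2}$ one has $\vp_{\Omega_1}(\pi(a))=\omega_{\Omega_1}(a)=\omega_{\Omega_2}(a)=\vp_{\Omega_2}(\pi(a))$ for every $a\in\Ao$, and since $\pi$ is faithful with $\pi(\Ao)$ weakly dense in $\M=\pi(\Ao)''$, the uniqueness of the weakly continuous extension invoked through \cite[Theorem 10.1.2]{kadison} gives $\widetilde{\vp_{\Omega_1}}=\widetilde{\vp_{\Omega_2}}$. Surjectivity is built into the definition of $\mathfrak{N}_\mathcal{T}(\Ao)$, and injectivity follows in the same spirit: if $\widetilde{\vp_{\Omega_1}}=\widetilde{\vp_{\Omega_2}}$, then evaluating at $\pi(a)$ yields $\omega_{\Omega_1}(a)=\omega_{\Omega_2}(a)$ for every $a\in\Ao$.

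Next I would verify that $\Phi$ and its inverse preserve convex combinations. Suppose $\omega_\Omega=\lambda\omega_{\Omega_1}+(1-\lambda)\omega_{\Omega_2}$ in $\mathfrak{M}_\mathcal{T}(\Ao)$ with $\lambda\in(0,1)$. Applying this identity with argument $a\in\Ao$ and using $\vp_{\Omega_i}(\pi(a))=\omega_{\Omega_i}(a)$, the functionals $\widetilde{\vp_\Omega}$ and $\lambda\widetilde{\vp_{\Omega_1}}+(1-\lambda)\widetilde{\vp_{\Omega_2}}$ coincide on $\pi(\Ao)$; both are weakly continuous on $\M$, so they agree on $\M$ by density. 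The same argument in the reverse direction, using that $\widetilde{\vp_\Omega}$ restricted to $\pi(\Ao)$ recovers $\omega_\Omega$, shows that any convex decomposition on the $\mathfrak{N}$-side pulls back to a convex decomposition on the $\mathfrak{M}$-side.

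From here the conclusion is formal: if $\omega_\Omega$ is extreme in $\mathfrak{M}_\mathcal{T}(\Ao)$ and $\widetilde{\vp_\Omega}=\lambda\widetilde{\vp_{\Omega_1}}+(1-\lambda)\widetilde{\vp_{\Omega_2}}$, pulling back through $\Phi^{-1}$ gives $\omega_\Omega=\lambda\omega_{\Omega_1}+(1-\lambda)\omega_{\Omega_2}$, forcing $\omega_{\Omega_1}=\omega_{\Omega_2}=\omega_\Omega$ and hence $\widetilde{\vp_{\Omega_1}}=\widetilde{\vp_{\Omega_2}}=\widetilde{\vp_\Omega}$ by the well-definedness of $\Phi$; the converse implication is symmetric. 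The only nontrivial issue to handle with care is the affineness in the forward direction, which requires the uniqueness of the weakly continuous extension; this is the only point where one uses more than the formal definitions.
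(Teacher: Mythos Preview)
Your proof is correct and follows exactly the approach the paper itself suggests: the paper omits a formal proof of this proposition, but in the paragraph preceding it already records that the map $\omega_\Omega\mapsto\widetilde{\vp_\Omega}$ is linear and injective, and your argument simply completes this by checking well-definedness, surjectivity, and affineness to obtain an affine bijection between the two convex sets.
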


\begin{defn}
A Banach quasi *-algebra  $(\X[\norm{\cdot}], \Ao
[\norm{\cdot}_{\scriptscriptstyle 0}])$ is said to be strongly
regular if ${\mc T}(\X)$ is sufficient and
$$ \|x\|= \sup_{\Omega \in {\mc T}(\X)} \Omega(x,x)^{1/2}, \quad \forall x \in \X.$$
\end{defn}

 \beex
If $\M$ is a von Neumann algebra possessing a sufficient family
$\FF$ of normal finite traces, then the CQ*-algebra $({\M}_p,\M)$
constructed in Section 3 is strongly regular. This follows from
the definition itself of the norm in the completion. \enex
 \beex If
$\vp$ is a normal faithful finite trace on $\M$, then ${\mc
T}(L^p(\vp))$, for $p\geq 2$, is sufficient. To see this, we start
with defining $\Omega_{\scriptscriptstyle 0}$ on $\M \times \M$ by
$$ \Omega_{\scriptscriptstyle 0} (X,Y)= \vp(Y^*X), \quad X,Y \in \M.$$
Then
$$|\Omega_{\scriptscriptstyle 0} (X,Y)|= |\vp(Y^*X)| \leq \|X\|_p \|Y\|_{p'}, \quad \forall X,Y \in \M.$$
Since $p\geq 2$, then $L^p(\vp)$ is continuously embedded into
$L^{p'}(\vp)$. Thus, there exists $\gamma >0$ such that
$\|Y\|_{p'} \leq \gamma \|Y\|_p$ for every $Y \in \M$. Let us
define
$$ \widetilde{\Omega}(X,Y)= \frac{1}{\gamma} \Omega_{\scriptscriptstyle 0} (X,Y), \quad \forall X,Y\in \M.$$
Then
$$ |\widetilde{\Omega}(X,Y)| \leq \|X\|_p \|Y\|_{p}, \quad \forall X,Y \in \M.$$
Hence, $\widetilde{\Omega}$ has a unique extension, denoted with
the same symbol, to $L^p(\vp) \times L^p(\vp)$. It is easily seen
that $\widetilde{\Omega} \in {\mc T}(L^p(\vp))$.

Were, for some $X \in L^p(\vp)$, $\Omega(X,X)=0$, for every
$\Omega  \in {\mc T}(L^p(\vp))$, we would then have
$\widetilde{\Omega}(X,X)=\|X\|^2_2=0$. This, clearly, implies
$X=0$. The equality $\widetilde{\Omega}(X,X)=\|X\|^2_2$ also shows
that $L^2(\vp)$ is strongly regular.
 \enex
 Let now
$(\X[\norm{\cdot}], \Ao [\norm{\cdot}_{\scriptscriptstyle 0}])$ be
a CQ*-algebra with unit $e$ and sufficient ${\mc T}(\X)$. Let
 $\pi:\Ao \hookrightarrow \BH$ be the universal representation of $\Ao$. Assume that the $C^*$algebra
$\pi(\Ao):=\mathfrak{M}$ is a von Neumann algebra. In this case,
$\mathfrak{M}_\mathcal{T}(\Ao)=\mathfrak{N}_\mathcal{T}(\Ao)$ and
$\mathfrak{N}_\mathcal{T}(\Ao)$ is a family of traces satisfying
Condition (P). Therefore, by Proposition \ref{nota}, we can
construct for $p\geq 1$, the CQ*-algebras
$({\mathfrak{M}}_p[\norm{\cdot}_{p,\,{\scriptscriptstyle\mathfrak{N}_\mathcal{T}(\Ao)}}],
\mathfrak{M}[\norm{\cdot}])$. Clearly, $\Ao$ can be identified
with $\M$. It is then natural to pose the question if also $\X$
can be identified with some ${\mathfrak{M}}_p$. The next Theorem
provides the answer to this question.

\begin{thm}
Let $(\X[\norm{\cdot}], \Ao [\norm{\cdot}_{\scriptscriptstyle
0}])$ be a CQ*-algebra with unit $e$ and and sufficient ${\mc
T}(\X)$.

 Then there
exist a von Neumann algebra $\M$ and a monomorphism $$\Phi: x\in\X
\rightarrow \Phi(x):=\widetilde{X}\in\M_2$$ with the following
properties:

(i) $\Phi$ extends the universal *-representation $\pi$ of $\Ao$;

 (ii) $\Phi(x^*)=\Phi(x)^*,
\quad \forall x \in \X$;

 (iii) $\Phi(xy)=\Phi(x)\Phi(y)$ for every
$x,y\in \X$ such that $x \in \Ao$ or $y \in \Ao$. \\Then $\X$ can
be identified with a space of operators affiliated with $\M$.

If, in addition, $(\X, \Ao)$ is strongly regular, then

(iv) $\Phi$ is an isometry of $\X$ into $\M_2$;

(v) If $\Ao$ is a W*-algebra, then $\Phi$ is an isometric
*-isomorphism of $\X$ onto
  $\M_2$.
\noindent
\end{thm}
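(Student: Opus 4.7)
The plan is to take $\M := \pi(\Ao)$, view (a rescaling of) $\mathfrak{N}_\mathcal{T}(\Ao)$ as a family $\FF$ of normal finite traces on $\M$, verify that $\FF$ satisfies the hypotheses of Theorem \ref{casoA}, and then extend $\pi$ by continuity to the desired monomorphism $\Phi: \X \to \M_2$. The key identity, valid for every $a \in \Ao$ and $\Omega \in \mathcal{T}(\X)$, is
$$\widetilde{\varphi_\Omega}(\pi(a)^*\pi(a)) \;=\; \omega_\Omega(a^*a) \;=\; \Omega(a^*a, e) \;=\; \Omega(a,a),$$
using property (ii) of Definition \ref{SLP}. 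Consequently $\|\pi(a)\|_{2,\FF}^2 = \sup_{\Omega} \Omega(a,a) \leq \|a\|^2$, so $\pi$ is norm-decreasing from $\Ao[\|\cdot\|]$ into $\M_2$ and extends uniquely to a bounded map $\Phi: \X \to \M_2$.

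Convexity and $w^*$-compactness of $\FF$ come from Proposition \ref{compatto}; sufficiency follows from sufficiency of $\mathcal{T}(\X)$ via the identity above; Condition (P) holds after dividing by $\|e\|^2$. The non-routine hypothesis of Theorem \ref{casoA} is the central invariance, which I would establish as follows: for central $Z \in \M$ with $0 \leq Z \leq \mb I$ and $\Omega \in \mathcal{T}(\X)$, set
$$\Omega_Z(a, b) := \widetilde{\varphi_\Omega}(\pi(b)^*\pi(a) Z), \qquad a, b \in \Ao,$$
and use the trace Cauchy--Schwarz inequality together with $Z \leq \mb I$ and centrality of $Z$ to obtain $|\Omega_Z(a,b)| \leq \Omega(a,a)^{1/2} \Omega(b,b)^{1/2} \leq \|a\|\|b\|$; this permits extension by continuity to $\X \times \X$. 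Positivity follows from factoring $Z = Z^{1/2} Z^{1/2}$ and sliding $Z^{1/2}$ across using centrality, and the tracial identity $\Omega_Z(x,x) = \Omega_Z(x^*, x^*)$ from cyclicity of $\widetilde{\varphi_\Omega}$. One then checks that $\widetilde{\varphi_{\Omega_Z}}(X) = \widetilde{\varphi_\Omega}(XZ)$, so $(\widetilde{\varphi_\Omega})_Z \in \FF$. Theorem \ref{casoA} now yields $(\M_2, \M)$ as a CQ*-algebra of measurable operators.

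Properties (i)--(iii) of the theorem follow by continuity: (i) is built in; (ii) transfers from $\Ao$ via continuity of the involution on $\X$; and (iii) transfers from the homomorphism property of $\pi$ using condition (a.3) of Definition \ref{cqetc}, which guarantees that $ay_n \to ay$ in $\X$ whenever $y_n \to y$ and $a \in \Ao$. For injectivity, if $\Phi(x) = 0$ and $a_n \to x$ in $\X$, then
$$0 = \widetilde{\varphi_\Omega}(|\Phi(x)|^2) = \lim_n \Omega(a_n, a_n) = \Omega(x, x) \qquad \forall \Omega \in \mathcal{T}(\X),$$
so $x = 0$ by sufficiency of $\mathcal{T}(\X)$.

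Under strong regularity the estimate $\|\pi(a)\|_{2,\FF} \leq \|a\|$ becomes an equality, upgrading $\Phi$ to an isometry and proving (iv). For (v), if $\Ao$ is a $W^*$-algebra then $\Phi(\Ao) = \pi(\Ao) = \M$ is dense in $\M_2$; since the isometric image $\Phi(\X)$ is closed in $\M_2$ it must coincide with $\M_2$. The main obstacle I anticipate is the central invariance step: one must produce a genuine element of $\mathcal{T}(\X)$ (a sesquilinear form on all of $\X \times \X$ with the full module property) representing $(\widetilde{\varphi_\Omega})_Z$, and in particular must verify that property (ii) of Definition \ref{SLP} survives the extension from $\Ao$ to $\X$, not merely the bound (iii).
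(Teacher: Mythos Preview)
Your overall strategy matches the paper's: build the family $\mathfrak{N}_\mathcal{T}(\Ao)$ of normal finite traces on a von Neumann algebra, verify the hypotheses of Theorem~\ref{casoA}, and extend $\pi$ by continuity using the identity $\widetilde{\vp_\Omega}(\pi(a)^*\pi(a))=\Omega(a,a)$. The injectivity, isometry, and surjectivity arguments you give are essentially those of the paper.

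There are two points where your proposal diverges from the paper, one stylistic and one a genuine gap.

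\medskip
\textbf{Central invariance.} You define $\Omega_Z$ first on $\Ao\times\Ao$ via the trace and then extend by continuity, worrying (correctly) that the module identity (ii) of Definition~\ref{SLP} must be re-verified for $x\in\X$. The paper avoids this entirely: since in the first part it is assumed that $\pi(\Ao)=\M$ is already a von Neumann algebra, one has $Z^{1/2}\in\pi(\Ao)$, so $c:=\pi^{-1}(Z^{1/2})\in\Ao$, and one may set
\[
\Omega_Z(x,y):=\Omega(xc,\,yc),\qquad x,y\in\X,
\]
directly on all of $\X\times\X$ using the $\Ao$-module action. Every property of $\mathcal{T}(\X)$ is then inherited immediately from $\Omega$ (positivity, boundedness, the module identity, and the tracial symmetry), and one computes $\vp_{\Omega_Z}(A)=\vp_\Omega(AZ)$. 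This removes the obstacle you anticipate.

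\medskip
\textbf{The choice of $\M$.} You set $\M:=\pi(\Ao)$, but this is a von Neumann algebra only when $\Ao$ is a W*-algebra; otherwise Proposition~\ref{nota} and Theorem~\ref{casoA} do not apply, and your own remark in part (v) (``$\pi(\Ao)=\M$'') contradicts the initial choice. The paper handles this in two stages: it first treats the case $\pi(\Ao)=\M$ (a von Neumann algebra), and then, for general $\Ao$, passes to $\Bo:=\pi(\Ao)''P$, where $P=\bigvee_{\Omega}P_\Omega$ is the supremum of the supports of the $\widetilde{\vp_\Omega}$. Cutting by $P$ is exactly what guarantees that the family of traces is sufficient on the ambient von Neumann algebra (your argument ``sufficiency follows from sufficiency of $\mathcal{T}(\X)$ via the identity above'' only yields sufficiency on $\pi(\Ao)$, not on $\pi(\Ao)''$). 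The faithfulness of $\pi$ then gives $\pi(a)P=\pi(a)$ for all $a\in\Ao$, and one repeats the Cauchy-sequence construction inside $\Bo_2$. Your proposal omits this reduction, so parts (i)--(iv) are established only under the extra hypothesis that $\pi(\Ao)$ is weakly closed.
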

\begin{proof} Let $\pi$ be the universal representation of $\Ao$ and
assume first that $\pi(\Ao)=:\mathfrak{M}$ is a von Neumann
algebra. By Proposition \ref{compatto}, the family of traces
$\mathfrak{M}_\mathcal{T}(\Ao)$ is convex and $w^{\ast}$-compact.
Moreover,  for each central positive element $Z$ with $0\leq Z
\leq \id \,$ and for $\vp\in\mathfrak{M}_\mathcal{T}(\Ao)$, the
trace $\vp_{\scriptscriptstyle Z}(X):=\vp(ZX)$ yet belongs to
$\mathfrak{M}_\mathcal{T}(\Ao).$ Indeed, starting from the form
$\Omega\in\mathcal{T}(\X)$ which generates $\vp$, one can define
the sesquilinear form
$$\Omega_{\scriptscriptstyle Z}(x,y):=\Omega(x \pi^{-1}(Z^{1/2}), y \pi^{-1}(Z^{1/2}))
 \quad \forall x,y \in \X.$$
We check that $\Omega_{\scriptscriptstyle Z}\in\mathcal{T}(\X).$

(i) $\Omega_{\scriptscriptstyle Z}(x,x)=\Omega(x
\pi^{-1}(Z^{1/2}), x \pi^{-1}(Z^{1/2}))\geq 0, \quad \forall
x\in\X$

(ii) We have, for every $x \in \X$ and for every $ a,b\in\Ao$,
\begin{eqnarray*}\Omega_{\scriptscriptstyle Z}(xa,b)&=&\Omega(xa \pi^{-1}(Z^{1/2}), b
\pi^{-1}(Z^{1/2}))\\ &=& \Omega(a
\pi^{-1}(Z^{1/2}), x^ {\ast}b \pi^{-1}(Z^{1/2})) \\
&=&\Omega_{\scriptscriptstyle Z}(a,x^ {\ast}b).\end{eqnarray*}

(iii) We have, for every $x,y \in \X$,
\begin{eqnarray*}\mid\Omega_{\scriptscriptstyle Z}(x,y)\mid &=& \mid\Omega(x \pi^{-1}(Z^{1/2}), y
\pi^{-1}(Z^{1/2}))|\\ &\leq& \norm{x\pi^{-1}(Z^{1/2})}\,
\norm{\pi^{-1}(Z^{1/2})y}\\ &\leq & \norm{x}
\norm{\pi^{-1}(Z^{1/2})}_{\scriptscriptstyle 0} \, \norm{y}
\norm{\pi^{-1}(Z^{1/2})}_{\scriptscriptstyle 0}\\ &\leq &
\norm{x}\, \norm{y}.\end{eqnarray*}

(iv) For every $x \in\X$,
\begin{eqnarray*} \Omega_{\scriptscriptstyle Z}(x^ {\ast},x^
{\ast})&=&\Omega(x^ {\ast} \pi^{-1}(Z^{1/2}), x^ {\ast}
\pi^{-1}(Z^{1/2}))\\ &=& \Omega(x \pi^{-1}(Z^{1/2}), x
\pi^{-1}(Z^{1/2})) = \Omega_{\scriptscriptstyle Z}(x,x).\end{eqnarray*}

Moreover, $\Omega_{\scriptscriptstyle Z}$ defines, for every
$A=\pi(a)\in\mathfrak{M}=\pi(\Ao)$, the following trace
\begin{eqnarray*} \vp_{\Omega_{\scriptscriptstyle Z}}(A)&=&\Omega_{\scriptscriptstyle Z}(a,e)=\Omega(a \pi^{-1}(Z^{1/2}),
\pi^{-1}(Z^{1/2}))\\ &=& \Omega(a \pi^{-1}(Z), e)=\Omega(
\pi^{-1}(AZ), e)=\vp_{\Omega}(AZ)   \end{eqnarray*} Then, the
family of traces
$\mathfrak{N}_\mathcal{T}(\Ao)\,(=\mathfrak{M}_\mathcal{T}(\Ao))$
satisfies the assumptions of Lemma \ref{LEMMA3}; therefore, if
$\eta_1, \eta_2\in\mathfrak{EN}_\mathcal{T}(\Ao)$, denoting with
$P_1$ and $P_2$ their respective supports, one has $P_1P_2=0.$

By the sufficiency of ${\mc T}(\X)$ we get
$$\|X\|_{2,{\scriptscriptstyle\mathfrak{M}_\mathcal{T}(\Ao)}}:=\sup_{\vp\in
\mathfrak{M}_\mathcal{T}(\Ao)}\|X\|_{2,\vp}=\sup_{\vp\in
\mathfrak{EM}_\mathcal{T}(\Ao)}\|X\|_{2,\vp} \quad \forall
X\in{\pi(\Ao)}.$$ By Proposition \ref{nota}, the Banach space
$\M_2$, completion of $\mathfrak{M}$ with respect to the norm $\|
\cdot
\|_{{2,\,{\scriptscriptstyle\mathfrak{N}_\mathcal{T}(\Ao)}}}$ , is
a CQ*-algebra. Moreover, since the supports of the extreme traces
satisfy the assumptions of Theorem \ref{casoA}, the CQ*-algebra
$(\M_2[\norm{\cdot}_{2,{\scriptscriptstyle\mathfrak{N}_\mathcal{T}(\Ao)}}],
\mathfrak{M}[\norm{\cdot}])$, consists of operators affiliated
with $\mathfrak{M}$. We now define the map $\Phi$.

For every element $x\in\X$, there exists a sequence $\{a_n\}$ of
elements of $\Ao$ converging to $x$ with respect to the norm of
$\X (\| \cdot \|)$. Put $X_n= \pi(a_n)$, $n \in {\mb N}$. Then,
\begin{eqnarray*}\|X_n-X_m\|_{2,{\scriptscriptstyle\mathfrak{N}_\mathcal{T}(\Ao)}}
&:=&\sup_{\vp\in\mathfrak{N}_\mathcal{T}(\Ao)}\|\pi(a_n)-\pi(a_m)\|_{2,\vp}\\
&=&\sup_{\Omega\in \mathcal{T}(\X)}
[\Omega((a_n-a_m)^*(a_n-a_m),e)]^{1/2} \\ &=& \sup_{\Omega\in
\mathcal{T}(\X)} [\Omega(a_n-a_m,a_n-a_m)]^{1/2}\leq \|a_n-a_m\|
\rightarrow 0.\end{eqnarray*} Let $\widetilde{X}$ be the
$\|\cdot\|_{2,{\scriptscriptstyle\mathfrak{M}_\mathcal{T}(\Ao)}}$-limit
of the sequence $({X_n})$ in ${\mathfrak{M}}_2$.
We define $\Phi(x):=\widetilde{X}.$

For each $x \in \X$, we put $$p_{\scriptscriptstyle {\mc
T}(\X)}(x)= \sup_{\Omega \in {\mc T}(\X)} \Omega(x,x)^{1/2}.$$
Then, owed to the sufficiency of ${\mc T}(\X)$,
$p_{\scriptscriptstyle {\mc T}(\X)}$ is a norm on $\X$ weaker than
$\| \cdot \|$. This implies that
$$\|\widetilde{X}\|^2_{2,{\scriptscriptstyle\mathfrak{N}_\mathcal{T}(\Ao)}}
= \lim_{n \to \infty}\sup_{\Omega\in
\mathcal{T}(\X)}\Omega(a_n,a_n)= \lim_{n \to
\infty}p_{\scriptscriptstyle {\mc T}(\X)}(a_n)^2 =
p_{\scriptscriptstyle {\mc T}(\X)}(x)^2.$$

>From this equality it follows easily that the linear map $\Phi$ is
well defined and injective. The condition (iii) can be easily
proved. If $(\X, \Ao)$ is strongly regular, then,  for every $x
\in \X$, $ p_{\scriptscriptstyle {\mc T}(\X)}(x)= \|x\|$.  Thus
$\Phi$ is isometric. Moreover, in this case, $\Phi$ is surjective;
indeed, if ${T}\in{\mathfrak{M}_2}$, then there exists a sequence
${T_n}$ of bounded operators of $\pi(\Ao)$ which converges to
${T}$ with respect to the norm
$\|\cdot\|_{2,{\scriptscriptstyle\mathfrak{N}_\mathcal{T}(\Ao)}}$.
The corresponding sequence $\{t_n\}\subset\Ao$, $T_n= \Phi(t_n)$,
converges to $t$ with respect to the norm of $\X$ and $\Phi(t)=T$
by definition. Therefore $\Phi$ is an isometric *-isomorphism.

To complete the proof, it is enough to prove that the given
CQ*-algebra $(\X,\Ao)$ can be embedded in  a CQ*-algebra
$(\GK,\Bo)$ where $\Bo$ is a W*-algebra. Of course, we may
directly work with $\pi(\Ao)$ with $\pi$ the universal
representation of $\Ao$. The family of traces
$\mathfrak{N}_\mathcal{T}(\Ao)$ defined on $\pi(\Ao)''$ is not
necessarily sufficient. Let $P_\Omega$, $\Omega \in {\mc T}(\X)$,
denote the support of $\widetilde{\vp}_\Omega$ and let
$$P=\bigvee_{\Omega\in\mathcal{T}(\X)}{P_\Omega}.$$
Then $\Bo:=\pi(\Ao)''P$ is a von Neumann algebra, that we can
complete with respect to the norm
$$ \|X\|
_{2,{\scriptscriptstyle\mathfrak{N}_\mathcal{T}(\Ao)}}=
\sup_{\Omega \in {\mc T}(\X)}\widetilde{\vp}_\Omega(X^*X), \quad X
\in \pi(\Ao)''P.$$ We obtain in this way a CQ*-algebra $(\GK,\Bo)$
with $\Bo$  a W*-algebra. The faithfullness of $\pi$ on $\Ao$
implies that
$$\pi(a)P=\pi(a), \quad \forall a \in \Ao.$$
It remains to prove that $\X$ can be identified with a subspace of
$\GK$. But this can be shown in the very same way as we did in the
first part: for each $x\in \X$ there exists a sequence
$\{a_n\}\subset \Ao$ such that $\|x-a_n\|\to 0$ as $n \to \infty$.
We now put $X_n=\pi(a_n)$. Then, proceeding as before, we
determine the element $\hat{X}\in \GK$, where
$$\hat{X}=\| \cdot \|
_{2,{\scriptscriptstyle\mathfrak{N}_\mathcal{T}(\Ao)}}-\lim
\pi(a_n)P.$$ It is easy to see that the map $x \in \X \to
\hat{X}\in \GK$ is injective. If $(\X,\Ao)$ is regular, but
$\pi(\Ao)\subset \pi(\Ao)''$, then $\Phi$ is an isometry of $\X$
into $\M_2$, but needs not be surjective.
\end{proof}

\vspace{6mm} {\noindent}{\bf Acknowledgment} We acknowledge
financial support of MIUR through national and local grants.

\bibliographystyle{amsplain}

\end{document}